\documentclass[11pt]{article}
\usepackage{enumerate}
\usepackage{natbib}
\usepackage{amsmath,amsthm,amssymb}

\usepackage{authblk}{\tiny }

\theoremstyle{plain}

\theoremstyle{definition} 

\newtheorem{remark}{Remark}

\usepackage{graphicx,psfrag,epsf}
\usepackage{epstopdf}
\usepackage{epsfig}
\usepackage[toc]{appendix}
\usepackage{bbm}
\usepackage{enumitem}
\usepackage{xcolor}
\usepackage{url} 
\usepackage{algorithm}
\usepackage{algpseudocode}
\usepackage{bm}
\usepackage{bigints}
\usepackage{mathrsfs}

\usepackage{multirow, array, hhline}

\usepackage{subcaption}

\usepackage{enumitem}

\usepackage{algorithm,algpseudocode}
\algnewcommand{\Inputs}[1]{%
  \State \textbf{Inputs:}
  \Statex \hspace*{\algorithmicindent}\parbox[t]{.8\linewidth}{\raggedright #1}
}
\algnewcommand{\Initialize}[1]{%
  \State \textbf{Initialize:}
  \Statex \hspace*{\algorithmicindent}\parbox[t]{.8\linewidth}{\raggedright #1}
}

\usepackage{graphicx}

\usepackage{datatool}
\usepackage{booktabs} 



\RequirePackage[hyperindex,breaklinks,colorlinks,citecolor=blue,urlcolor=black]{hyperref}

\theoremstyle{plain}
\newtheorem{theorem}{Theorem}
\newtheorem{lemma}{Lemma}

\newcommand\iid{\stackrel{\rm iid}{\sim}}
\newcommand\J{{\mathcal J}}

\newcommand{\blind}{0}

\addtolength{\oddsidemargin}{-.7in}%
\addtolength{\evensidemargin}{-.7in}%
\addtolength{\textwidth}{1.4in}%
\addtolength{\textheight}{1.5in}%
\addtolength{\topmargin}{-1.0in}%

\begin{document}

   \def\spacingset#1{\renewcommand{\baselinestretch}%
      {#1}\small\normalsize} \spacingset{1}

   
   \if0\blind
   {
      \title{ Penalty-Induced Basis Exploration for Bayesian Splines
  }  
      \author[1]{Sunwoo Lim}
      \author[2]{Sihyeon Pyeon}
      \author[2,3]{Seonghyun Jeong\thanks{Corresponding author: sjeong@yonsei.ac.kr}} 
      \affil[1]{Marshall School of Business, University of Southern California, Los Angeles, CA, USA}
      \affil[2]{Department of Statistics and Data Science, Yonsei University, Seoul, Korea}
      \affil[3]{Department of Applied Statistics, Yonsei University, Seoul, Korea}
      \maketitle
   } \fi
   
   \if1\blind
   {
      \bigskip
      \bigskip
      \bigskip
      \begin{center}
         {\LARGE\bf Title}
      \end{center}
      \medskip
   } \fi
   
   \begin{abstract}
 Spline basis exploration via Bayesian model selection is a widely employed strategy for determining the optimal set of basis terms in nonparametric regression. However, despite its widespread use, this approach often encounters performance limitations owing to the finite approximation of infinite-dimensional parameters. This limitation arises because Bayesian model selection tends to favor simpler models over more complex ones when the true model is not among the candidates. Drawing inspiration from penalized splines, one potential remedy is to incorporate an additional roughness penalty that directly regulates the smoothness of functions. This strategy mitigates underfitting by allowing the inclusion of more basis terms while preventing overfitting through explicit smoothness control. Motivated by this insight, we propose a novel penalty-induced prior distribution for Bayesian basis exploration. The proposed prior evaluates the complexity of spline functions based on a convex combination of a roughness penalty and a ridge-type penalty for model selection. Our method adapts to the unknown level of smoothness and achieves the minimax-optimal posterior contraction rate up to a logarithmic factor. We also provide an efficient Markov chain Monte Carlo algorithm for its implementation. Extensive simulation studies demonstrate that our method outperforms competing approaches in terms of performance metrics and model complexity. An application to real datasets further substantiates the validity of our proposed approach.
   \end{abstract}
   
   \noindent%
   {\it Keywords:} Adaptive inference; Basis selection; Bayesian P-splines; Nonparametric regression; Posterior contraction rate.
   
   \spacingset{1.1} 
   

   \section{Introduction}\label{sec:intro}

We consider a Bayesian framework for analyzing the multivariate nonparametric regression model,
    \begin{align} 
    	\label{eqn:nonparamreg}
        y_i = f(x_i) + \varepsilon_i, \quad  \varepsilon_i \iid {\text N}(0, \sigma^2), \quad i = 1,\dots,n,
    \end{align} 
    where $f : [0,1]^p \rightarrow \mathbb{R}$ is a smooth function, $x_i = (x_{i1},\dots,x_{ip})^T\in[0,1]^p$ are fixed design points, and $\sigma^2>0$ is a variance parameter.
A classical strategy for estimating $f$ is to represent it using a tensor-product basis expansion, which allows for flexible modeling of smooth functions in multiple dimensions \citep{schumaker2007spline}:
\begin{align}
f(u_1,\dots,u_p)=\sum_{j_1=1}^{J_1}\cdots\sum_{j_p=1}^{J_p} \theta_{j_1\dots j_p} \prod_{m=1}^p\psi_{j_m}(u_{m}),
\label{eqn:tensor}
\end{align}
where $\{\psi_j\}_{j=1}^{J_m}$ are basis functions for the $m$th coordinate and $\theta_{j_1\dots j_p}$ are the coefficients. 
Among the various options available for this construction, we adopt B-spline basis functions for $\{\psi_j\}_{j=1}^{J_m}$ for each $m=1,\dots,p$.
While this approach is both straightforward and powerful, it is important to note that the quality of estimation is highly sensitive to the number and location of knots. 
Accordingly, it is essential to carefully control the regularity of the regression function to avoid both underfitting and overfitting, ensuring that the model appropriately adapts to the inherent smoothness of the unknown target function. 

One common approach to determine a reasonable knot placement relies on Bayesian model selection (BMS) to establish the optimal number of knots while assuming they are equally spaced according to either the Lebesgue or the empirical measure \citep{kang2024model}.
This method does not impose explicit smoothness penalties on the spline coefficients; instead, the smoothness of the function is inherently characterized by the placement of knots, which are determined based on their posterior probabilities inferred from the data. This approach has been extensively studied and appears in the literature under different terminologies \citep{ghosal2000convergence, rivoirard2012posterior, shen2015adaptive}.
For consistency, we refer to it as Bayesian basis exploration (BBE) in this study.

Choosing a suitable prior distribution is crucial for BBE from both theoretical and practical perspectives. Theoretically, a well-structured prior ensures that the posterior contracts at an optimal rate, even when the smoothness of the target function is unknown \citep{de2012adaptive,shen2015adaptive}. Practically, a suitably informative prior is necessary to avoid Bartlett's paradox in BMS \citep{Bartlett,moreno1998intrinsic}. With these considerations in mind, conventional priors are designed to identify the optimal knot configuration by using priors interpreted as a ridge-type penalty on the coefficients \citep{ishwaran2005spike}. The resulting regularization is then determined solely by the chosen knots, without additional control over the smoothness of $f$.
However, we observe that this limitation often leads to underfitting, as BBE---unlike the true model-nested BMS problem---tends to favor simpler models because of the approximation error induced by the finite truncation of infinite-dimensional parameters; see Section~\ref{sec:simunireg}.
A simple modification, such as using a more informative prior to encourage additional knots, risks overfitting when the target function is highly smooth. 
To address this issue, we find that imposing an additional smoothness penalty on the spline function with a given knot configuration---using, for example, Bayesian P-splines \citep{lang2004bayesian}---encourages BBE to favor more complex models while providing additional control over the smoothness of $f$, thereby reducing the risk of underfitting and overfitting. In this context, incorporating additional smoothness regularization into a prior for BBE can be highly effective.

In this study, we propose a new approach that integrates roughness penalization into BBE. Specifically, we construct a prior distribution for BBE by combining Bayesian P-splines with Zellner’s g-prior \citep{zellner1986assessing}, introducing a weight parameter that governs the strength of the penalty. This framework adaptively determines the basis functions while ensuring the desired level of smoothness in the estimated function from both theoretical and practical perspectives. From a theoretical standpoint, we demonstrate that the proposed method attains the optimal posterior contraction rate while adapting to unknown smoothness levels. From a practical standpoint, our numerical study shows that the proposed method consistently outperforms existing alternatives or achieves comparable results with improved model efficiency. Figure~\ref{plot:posteriormeancurves} illustrates the empirical performance of our approach across three bivariate functions, showing that, relative to other Bayesian competitors, it yields the smallest mean squared error (MSE), defined as $\int_{[0,1]^p} (f(x)-\hat f(x))^2 dx$, where $\hat f$ is the pointwise posterior mean. The R implementation of the proposed method is available at \url{https://github.com/Damelim/BPBS}.

\begin{figure}[t!]
    \centering 
\includegraphics[width=6.3in]{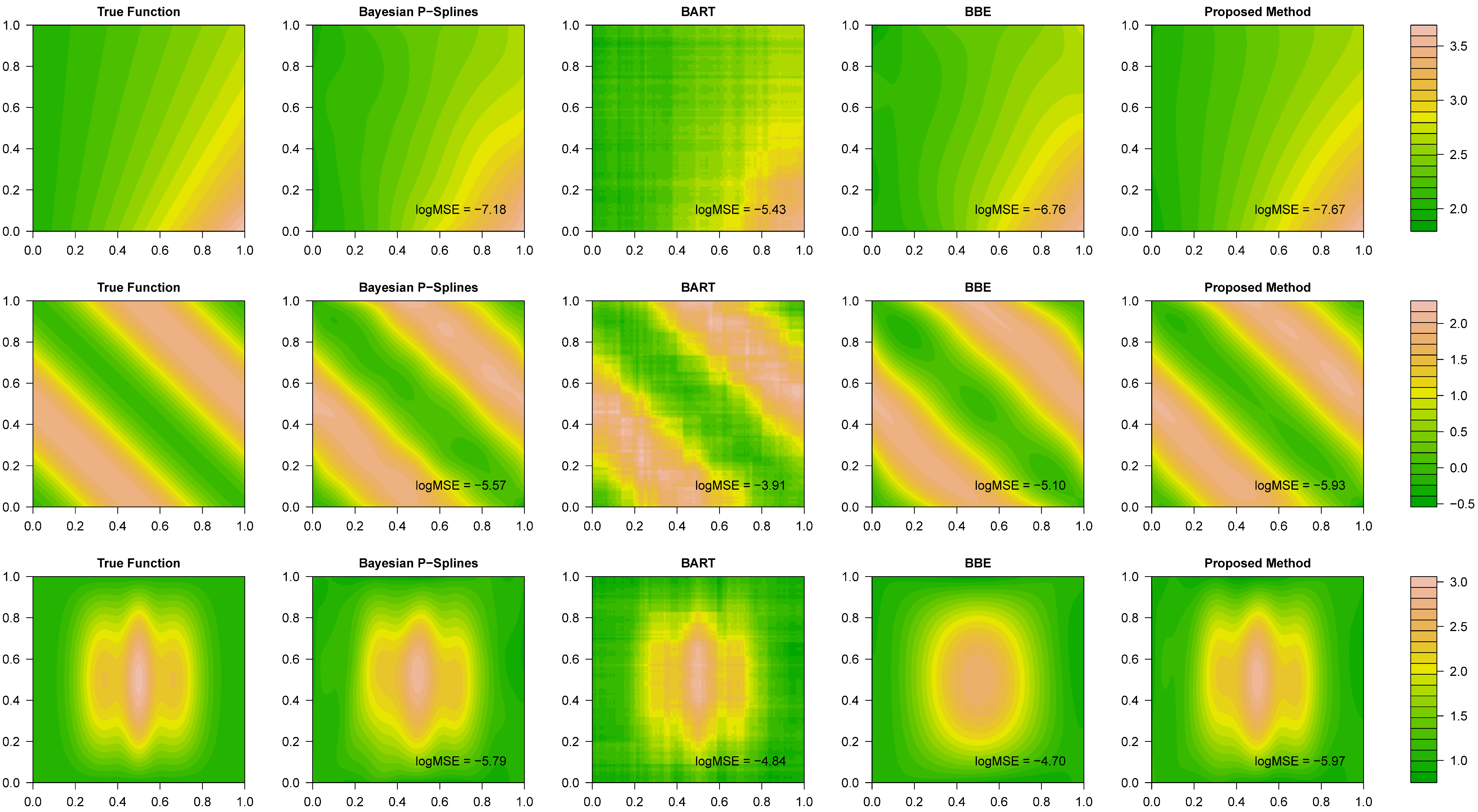}
    \caption{Comparison of the proposed method with Bayesian competing methods: Bayesian P-splines, Bayesian additive regression trees (BART), and BBE (with the Zellner-Siow prior; see Section~\ref{sec:simunireg}). Each dataset is generated from the model in \eqref{eqn:nonparamreg} with each function using $\sigma^2=0.5^2$ and $n=4000$. For each method, the panel displays the contour plot of the pointwise posterior mean and the logarithm of the MSE.}
    \label{plot:posteriormeancurves}
\end{figure}

   The remainder of the paper is organized as follows. 
	Section~\ref{basisselection} provides an overview of BBE and its conventional priors. Section~\ref{proposedmethod} introduces the proposed method, which incorporates roughness penalization into BBE, along with a well-specified prior distribution and a Markov chain Monte Carlo (MCMC) algorithm.
Section~\ref{sec:theory} presents the theoretical analysis of the proposed method, establishing the posterior contraction rate. Section~\ref{sec:simulation} reports the results of an extensive simulation study, and Section~\ref{sec:realdata} demonstrates the application of the proposed method to real datasets. Technical proofs and additional details on the sampling algorithm are provided in the Appendix.
    
  \subsection{Notation}
  
  For a function $g:[0,1]^p\rightarrow \mathbb R$, $\lVert g \rVert_n = (n^{-1}\sum_{i=1}^n (g(x_i))^2)^{1/2}$ denotes the empirical $L_2$-norm relative to the design points $x_i$, $i=1,\dots,n$.
The notation $\rho_{\min}(\cdot)$ denotes the minimum eigenvalue of a square matrix. For sequences $a_n$ and $b_n$, writing $a_n\lesssim b_n$ (or equivalently $b_n\gtrsim a_n$) means that there exists a constant $C$ independent of $n$ such that $a_n\le C b_n$. We write $a_n\asymp b_n$ if both $a_n\lesssim b_n$ and $b_n\lesssim a_n$ hold. We use $\Pi$ to denote the prior or posterior distribution and $\pi$ to denote the corresponding density with respect to an appropriate dominating measure.

    \section{Bayesian Basis Exploration}\label{basisselection}
	Here we review BBE and its limitation, which motivates the proposed method. 
    BBE determines the optimal specification of knots in a data-driven manner using BMS. 
    Conventionally, this idea aligns with the principles of Bayesian model averaging \citep{hoeting1999bayesian}, as it involves averaging over various knot configurations based on the posterior distribution, rather than choosing a single configuration with the highest posterior probability.
    For each $m=1,\dots, p$, we employ B-spline basis functions defined by a set of predetermined knots, denoted as $0 = \xi_{m0} < \xi_{m1} < ... < \xi_{mK_m} < \xi_{m,K_m+1}=1$.    
    The boundary knots can be replaced with $\xi_{m0}=\min_i x_{im}$ and $\xi_{m,K+1}=\max_i x_{im}$ if the domain boundary is unknown.
    As a result, for a spline order $l_m\ge 1$ (polynomial degree $l_m-1$) and $K_m\ge 0$ interior knots, we generate $J_m = l_m + K_m$ terms of B-spline basis functions $\{\psi_j\}_{j=1}^{J_m}$ for coordinate $m$, whereby the function $f:[0,1]^p\rightarrow \mathbb R$ is expressed by \eqref{eqn:tensor}. For notational simplicity, we define $\J=\{J_m\}_{m=1}^p$ and $J=\prod_{m=1}^p J_m$. We restrict ourselves to at least cubic-order B-splines (i.e., $l_m\ge 4$), which ensure that $J\ge 4$ for any $p\ge 1$. 
     
	Choosing an appropriate set of knots is crucial for ensuring the reasonable regularity of the function estimates. To accomplish this, BMS can be employed to explore the model space for knot specifications. Several approaches have been proposed, such as determining the knot locations entirely freely \citep{denison1998automatic,dimatteo2001bayesian} and selecting important knots from a set of potential candidates \citep{smith1996nonparametric,kohn2001nonparametric}. As defined for BBE in Section~\ref{sec:intro}, we explore varying the number of knots while arranging their locations equidistantly for each coordinate \citep{shen2015adaptive}. Alternatively, knots can be positioned based on the sample quantiles of the design points \citep{kang2024model}. In this study, we specifically focus on controlling the number of knots with the equidistant arrangement.

Let $B_\J\in\mathbb R^{n\times J}$ be the design matrix and $\theta_\J\in\mathbb R^J$ be its corresponding coefficient vector induced from the expression in \eqref{eqn:tensor}. Specifically, $B_\J$ can be expressed as $B_\J = B_{J_1}\odot \dots \odot B_{J_m}$, where $\odot$ denotes the row-wise Kronecker product, and $B_{J_m}\in\mathbb R^{n\times J_m}$ is the basis matrix for the univariate marginal in coordinate $m$, with its $(i,j)$ element given by $\psi_j(x_{im})$ \citep{wood2006low}.
 The model
in \eqref{eqn:nonparamreg} can be expressed in matrix form as
\begin{align}
y=B_\J\theta_\J + \varepsilon,\quad \varepsilon\sim \text{N}_n(0_n,\sigma^2 I_n),
\label{eqn:basismodel}
\end{align}
where $y=(y_1,\dots,y_n)^T\in\mathbb R^n$.
	Given the nature of BMS, it is crucial to use an informative prior for $\theta_\J$ \citep{moreno1998intrinsic}. A conventional approach is assigning a conditional normal prior
    \begin{align} 
	\theta_\J\mid\J,\sigma^2,\lambda \sim \text{N}_J(0_J,\lambda\sigma^2 \Psi_\J),
		\label{eqn:normpri}
    \end{align}
	with a positive definite matrix $\Psi_\J$ and a dispersion factor $\lambda$ controlling the strength of prior information. Choosing appropriate $\Psi_\J$ is important because $\lambda$ cannot be arbitrarily large owing to Bartlett's paradox \citep{Bartlett}.	
	Common choices include $\Psi_\J=I_J$, which is reasonable given the similar scale of the columns in $B_\J$. Another prevalent option is using the g-prior with $\Psi_\J=n(B_\J^T B_\J)^{-1}$ \citep{zellner1986assessing}, assuming that $B_\J$ has full column rank. The dispersion parameter $\lambda$ can be treated either as fixed or as a parameter with an assigned prior distribution \citep{liang2008mixtures}.

Although the prior in \eqref{eqn:normpri} is widely used, it imposes strong prior information on the height of the function by shrinking it toward zero, resulting in a lack of invariance to shifts in the target function. To address this issue, one effective approach is to assign a nearly or entirely flat prior to the global mean term $\theta_0$ of $f$, while imposing an informative prior on the remaining signal $\tilde{f} = f - \theta_0$ under an identifiability constraint.
Specifically, we write $f(\cdot) = \theta_0 + \tilde f(\cdot)$ with the constraint $\sum_{i=1}^n \tilde f(x_i) = 0$ \citep{xue2009consistent,jeong2022bayesian}. To complete the reparameterization, define the transformed basis matrix $\tilde B_\J =(I_n-n^{-1}1_n1_n^T)B_{\J,-1}$, which is orthogonal to $1_n$, where $B_{\J,-1}\in\mathbb R^{n\times (J-1)}$ is the matrix obtained by removing the first column of $B_\J$. 
Then, one can show that $B_\J\theta_\J = \theta_0 1_n + \tilde B_\J\tilde \theta_\J$ for some $\tilde \theta_\J\in\mathbb R^{J-1}$ (see the proof of Lemma~\ref{lmm:pspline} below). Consequently, $\tilde f(x_i)$ is expressed as the $i$th entry of $\tilde B_\J\tilde \theta_\J$.

A suitable prior should be assigned to $\tilde \theta_\J$, while a flat prior can be assigned to the global mean term $\theta_0$ to ensure invariance to shifts. Unlike in \eqref{eqn:normpri}, using the identity matrix in the prior covariance is now less advisable owing to the transformation. For example, if the transformed basis matrix is obtained by removing the last column instead of the first, the identity prior covariance induces different prior information. 
A natural choice is the g-prior with centering owing to its invariance property \citep{zellner1986assessing}; that is,
    \begin{align} 
    	\begin{split}
    		\pi(\theta_0) &\propto 1, \\
    		\tilde \theta_\J \mid \J, \sigma^2, \lambda &\sim \text{N}_{J-1}(0_{J-1}, \lambda \sigma^2 n( \tilde B_\J^T \tilde B_\J)^{-1}).
    	\end{split}
    	\label{eqn:gpriortransformed}
    \end{align}
 The improper prior for $\theta_0$ can alternatively be replaced by a normal prior with a large variance.
 It is evident that the prior in \eqref{eqn:gpriortransformed} is invariant to shifts in $y$ (or, equivalently, the target function $f$), and to scaling of $y$ provided that the Jeffreys prior is used for $\sigma^2$. Moreover, it exhibits invariance under any invertible linear transformation of the design matrix, which is a desirable property of the g-prior \citep{liang2008mixtures}.
	Observe that the prior in \eqref{eqn:gpriortransformed} penalizes deviations from the center $\tilde f=f-\theta_0$ through the term $\tilde \theta_\J^T\tilde B_\J^T \tilde B_\J\tilde \theta_\J$ as a ridge-type penalty, which can be approximated by $\int( f(u)-\theta_0)^2 du$ if the design points are sufficiently regular. Therefore, BBE with the g-prior can be viewed as the process of choosing the optimal $\J$ while penalizing the magnitude of $f-\theta_0$. 
	
	With an inverse gamma prior or the Jeffreys prior on $\sigma^2$, the marginal likelihood $p(y \mid \J,\lambda)$ can be analytically derived in closed form for both \eqref{eqn:normpri} and \eqref{eqn:gpriortransformed}. In addition, certain prior distributions for $\lambda$ allow it to be integrated out \citep{liang2008mixtures, maruyama2011fully}, making the fully marginal likelihood $p(y \mid \mathcal{J})$ available. This renders the exploration of the optimal $\J$ using BBE computationally straightforward. Furthermore, with a suitably chosen prior on $\J$ that exhibits exponential decay, BBE benefits from well-established asymptotic properties that allow adaptation to unknown levels of smoothness \citep{de2012adaptive,belitser2014adaptive,shen2015adaptive}, provided that a proper prior is used. Notably, this adaptive property has not yet been achieved in Bayesian penalized splines \citep{bach2021posterior}.
	
However, despite these theoretical and practical advantages, BBE has a limitation in that function smoothness depends solely on the number and location of the basis terms. This can lead to substantial estimation bias stemming from the approximation error between the true function and the spline approximation, particularly because BMS tends to favor parsimonious models, thereby disregarding the infinite-dimensional nature of the true model. More specifically, because the spline-based prior does not encompass the true data-generating process, the estimation procedure tends to underfit rather than overcomplicate models for marginal improvements. This negative feature of BBE is evident in our simulation study in Section~\ref{sec:simunireg}. One might consider a simple modification that employs a more informative prior favoring additional knots, but this approach risks overfitting when the underlying function is sufficiently smooth. Instead, we propose incorporating a roughness penalty term into the BBE prior, which ensures the selection of a sufficient number of knots while simultaneously providing additional smoothness control. The proposed method is detailed in the next section.

    \section{Proposed Method}
    \label{proposedmethod}
    
In this section, we present the proposed method, which incorporates the P-spline penalty into the prior construction for BBE. The resulting prior features two components in the precision matrix: a roughness penalty that regulates the smoothness of $f$, and a ridge-type penalty that controls the magnitude of $f-\theta_0$. We begin by providing an overview of Bayesian P-splines and discussing how they are adapted for our purposes. Next, we detail the construction of the prior and present a computationally efficient MCMC algorithm for obtaining the model-averaged posterior distribution.

    \subsection{Recasting Bayesian P-Splines}
    \label{recastingbpspline}
    
    Bayesian P-splines are viewed as the Bayesian counterpart to penalized splines \citep{lang2004bayesian}, which have achieved widespread success and have found applications in numerous function estimation problems. For multidimensional smoothing with the tensor product of B-splines, Bayesian P-splines can be extended from the univariate construction.
    Specifically, for the model in \eqref{eqn:basismodel}, the penalized spline penalty $\theta_\J^T P_\J \theta_\J$ can be expressed using the  matrix $P_\J = \sum_{m=1}^p P_{J_m}\in\mathbb R^{J\times J}$, where $P_{J_m} = I_{J_1}\otimes I_{J_{m-1}}\otimes D_{J_m}^T D_{J_m} \otimes I_{J_{m+1}}\otimes I_{J_p}$ is the partial penalty matrix, with $D_{J_m}$ representing a matrix representing a finite difference operation \citep{wood2006low,wood2017p}. We use the second-order finite difference matrix defined as
    \begin{align*}
    	D_{J_m} &= 
    	\begin{pmatrix} 
    		1 & -2 & 1 & 0 & \cdots & 0 & 0 & 0 & 0 \\
    		0 & 1 & -2 & 1 & \cdots & 0 & 0 & 0 & 0 \\
    		\vdots & \vdots & \vdots & \vdots & \ddots & \vdots & \vdots & \vdots & \vdots \\
    		0 & 0 &  0 & 0  & \cdots & 0 & 1 & -2 & 1
    	\end{pmatrix} \in \mathbb{R}^{(J_m-2) \times J_m}.
    \end{align*}
 Using this expression, Bayesian P-splines are characterized by putting an improper prior on the spline coefficients $\theta_\J$, 
    \begin{align}
    	\pi(\theta_\J\mid\J,\sigma^2,\lambda) &\propto \frac{|P_\J|_+^{1/2}}{(\lambda\sigma^2)^{\textnormal{rank}( P_\J)/2}}\exp \!\left(-\frac{1}{2\lambda\sigma^2} \theta_\J^T P_\J \theta_\J \right),\quad \theta_\J\in\mathbb R^J,
    	\label{eqn:psplineprior}
    \end{align}
where $|\cdot|_+$ denotes the pseudo-determinant.
Compared to the original construction in \citet{lang2004bayesian}, we parameterize the smoothing parameter as $\lambda\sigma^2$ rather than $\lambda$, ensuring that the prior remains invariant to the scaling of the response vector $y$ while also providing conjugacy when combined with the BBE prior in Section~\ref{priorspecification}. The dispersion parameter $\lambda$ plays a crucial role in determining the level of smoothness and is commonly assigned an inverse gamma prior. Instead of using a common $\lambda$, different smoothing parameters can be assigned to each $P_{J_m}$ to account for anisotropic smoothness; see, for example, \citet{wood2006low}.

    Our objective is to incorporate the penalized spline penalty into the BBE prior to introduce roughness penalization. We are particularly interested in the form of \eqref{eqn:gpriortransformed} for the mean response $\theta_0 1_n + \tilde B_\J\tilde \theta_\J$ owing to its invariance properties, rather than \eqref{eqn:normpri} for $B_\J\theta_\J$. However, there is a discrepancy between the parameterizations in \eqref{eqn:gpriortransformed} and \eqref{eqn:psplineprior}; the former is for the transformed coefficients $(\theta_0,\tilde\theta_\J)$, while the latter is for the B-spline coefficients $\theta_\J$. To align the two expressions, we transform the Bayesian P-spline prior in \eqref{eqn:psplineprior} so that it applied to $(\theta_0,\tilde\theta_\J)$. The result is presented in the following lemma. The proof is provided in Appendix~\ref{appendix_proof}.

    \begin{lemma}[Transformed P-spline prior] \label{lmm:pspline}
    	The Bayesian P-spline prior for $(\theta_0,\tilde\theta_\J)$, derived from the original prior in \eqref{eqn:psplineprior}, is given by
    	\begin{align}
    		\begin{split}
    			\pi(\theta_0) &\propto 1 ,\\
    			\pi(\tilde\theta_\J\mid\J,\sigma^2,\lambda) &\propto \frac{|\tilde P_\J|_+^{1/2}}{(\lambda\sigma^2)^{\textnormal{rank}(\tilde P_\J)/2}}\exp \!\left(-\frac{1}{2\lambda\sigma^2} \tilde\theta_\J^T\tilde P_\J \tilde\theta_\J \right),\quad \tilde\theta_\J\in\mathbb R^{J-1},
    			\label{eqn:psplineprior_modified}
    		\end{split}
    	\end{align}
    	where $\tilde P_\J\in\mathbb R^{(J-1)\times(J-1)}$ is obtained by removing the first row and column of $P_\J$.
    \end{lemma}
	The expression in \eqref{eqn:psplineprior_modified} implies that it exhibits invariance to shifting.    Since both \eqref{eqn:gpriortransformed} and \eqref{eqn:psplineprior_modified} are priors for the transformed coefficients $(\theta_0,\tilde\theta_\J)$, we can combine the two expressions to construct a new prior distribution that capitalizes on the strengths of each approach.
	 Using the definition of the pseudo-determinant, \eqref{eqn:psplineprior_modified} indicates that the P-spline prior for $\tilde\theta_\J$ can be approximated by the density of $\text{N}_{J-1}(0_{J-1},\lambda\sigma^2(\tilde P_\J + \delta I_{J-1})^{-1})$ for a small $\delta>0$, up to a normalizing constant depending on $\J$. The flat prior for $\theta_0$ can be approximated by a normal prior with a large variance. These approximations offer a framework for constructing the combined prior distribution.
 
    \subsection{Construction of the Prior Distribution} 
    \label{priorspecification}

    Observe that both \eqref{eqn:gpriortransformed} and \eqref{eqn:psplineprior_modified} impose a flat improper prior on the global mean parameter $\theta_0$, which can be approximated by a normal prior with large variance.
    For the remaining coefficients associated with the centered signal $\tilde f$, we propose a prior motivated by \eqref{eqn:gpriortransformed} and \eqref{eqn:psplineprior_modified}. The resulting prior distribution on $(\theta_0,\tilde\theta_\J)$ is given by
    \begin{align} 
    	\begin{split}
    		\theta_0\mid \sigma^2  &\sim \text{N}(0, \sigma^2 \kappa^2), \\
    		\tilde\theta_\J\mid \J,\tau,\lambda ,\sigma^2 &\sim \text{N}_{J-1} \!\left(0_{J-1}, \lambda \sigma^2\! \left( (1-\tau) \tilde P_\J + \tau n^{-1}\tilde B_\J^T \tilde B_\J \right)^{-1} \right),
    	\end{split}
    	\label{eqn:Covariance_proposedmethod}
    \end{align} 
    where $\kappa>0$ is a fixed constant, while $\tau\in(0,1)$ and $\lambda>0$ are parameters assigned suitable prior distributions. Observe that the covariance matrix is nonsingular since we assume that $\tilde B_\J^T \tilde B_\J$ is positive definite. The constant $\kappa$ should be chosen sufficiently large to mimic a flat prior for the global mean, ensuring that its estimate remains uninfluenced by other coefficients. Alternatively, one could directly employ the flat prior $\pi(\theta_0) \propto 1$ from a practical perspective, but we use the proper proxy in \eqref{eqn:Covariance_proposedmethod} for theoretical purposes.
    The weight parameter $\tau\in(0,1)$ serves as a means to balance between the two penalty terms, effectively determining the relative importance of model selection versus penalization. 
  If $\tau\rightarrow 1$, the prior in \eqref{eqn:Covariance_proposedmethod} approaches the model selection prior in \eqref{eqn:gpriortransformed}. On the other hand, if $\tau\rightarrow0$, the prior is viewed as a proxy of the P-spline prior in \eqref{eqn:psplineprior_modified}.
   Combined with $\tau$, the dispersion parameter $\lambda$ plays similar roles in BBE and P-splines. More precisely, the factor $\lambda/\tau$ determines deviations from the center through $n^{-1}\tilde B_\J^T \tilde B_\J$, while $\lambda/(1-\tau)$ controls the smoothness of the function through the penalty matrix $\tilde P_\J$.
   
	The prior specification is now completed for the remaining parameters. The entire priors are designed with both theoretical and practical considerations.
   \begin{itemize}
   	\item {\it Prior for $\sigma^2$.} 
  	We assign an inverse gamma prior $\sigma^2\sim \text{IG}(a_\sigma, b_\sigma)$, which is a natural choice owing to conjugacy. The constants $a_\sigma > 0$ and $b_\sigma > 0$ are chosen to be small to closely approximate the Jeffreys prior, which retains invariance properties under scaling. The resulting marginal prior for the coefficients $(\theta_0,\tilde\theta_\J)$ is a $t$-prior, whose polynomial tails pose challenges for theoretical analysis. To address this, we restrict $\sigma^2$  in its marginal posterior distribution with high posterior probability; see Lemma~\ref{lmm:sigma}.
  	   	  	
   	\item {\it Prior for $\tau$.} 
   	As specified in Assumption~\ref{asm:prior} below, the theoretical analysis requires that the prior for $\tau$ exhibit exponential decay toward zero to ensure the positive definiteness of the covariance matrix with high prior probability. Empirical results, however, suggest that performance improves when the prior for $\tau$ is shifted closer to one for small samples and gradually moves toward zero as the sample size increases, imposing a stronger roughness penalty for larger samples. To balance these considerations, we adopt a truncated beta prior $\text{Beta}_{(\delta, 1)}(c_\tau n^{-1}, 1)$ with $c_\tau > 0$ and small $\delta > 0$, where $\text{Beta}_A$ denotes a beta distribution truncated to $A$.
   	
   	\item {\it Prior for $\lambda$.} Similarly, Assumption~\ref{asm:prior} requires the prior for $\lambda$ to exhibit exponential decay while maintaining sufficient mass over positive reals bounded away from zero. An inverse gamma prior, which appears natural owing to its semi-conjugacy, only provides a polynomial tail on the right side. Among the distributions with exponential tails, we employ an exponential prior distribution with rate $c_\lambda>0$, as its monotone density facilitates the construction of an MCMC algorithm using data augmentation; see Section~\ref{MCMC_algorithm} for more details.
   	
   	\item  {\it Prior for $\J$.} Assumption~\ref{asm:prior} requires a Poisson-type tail for $\mathcal J$. Instead of a Poisson prior, we choose a density that is monotonically decreasing, which is intuitively desirable for penalizing more complex models. Specifically,
   	\begin{align}
   	\pi(\J) \propto \left(\frac{\nu}{J-\prod_{m=1}^p l_m+\nu/e}\right)^{J-\prod_{m=1}^p l_m+\nu/e}\mathbbm 1\{\rho_{\min}(B_\J^T B_\J)>0\},
   	\label{eqn:priorj}
   	\end{align}
   	where $\nu>0$ is a fixed constant. The factor $\nu/ e$ makes the density monotonically decreasing. Combined with Assumption~\ref{asm:eigen} in Section~\ref{sec:theory}, it satisfies the required tail properties. 
   	Given that $\nu$ encapsulates prior information about $J$, it is reasonable to set $\nu$ to grow exponentially with $p$, that is, $\log \nu \propto p$.
   \end{itemize}

\subsection{Efficient Monte Carlo Sampling Algorithm} \label{MCMC_algorithm}

We present an efficient MCMC algorithm for sampling from the joint posterior distribution $\pi(\J, \sigma^2, \theta_0, \tilde\theta_\J, \tau, \lambda \mid y)$. Our carefully designed prior distribution enables sampling from the posterior distribution to be straightforward and easily implementable. Once the MCMC draws are collected, one can produce the model-averaged posterior of any functional of $f$, for example, pointwise evaluations at specific values or credible bands of the function.
The following is a blocked Gibbs sampler that alternates between the full conditionals of $(\J, \sigma^2, \theta_0, \tilde\theta_\J)$, $\tau$, and $\lambda$.

\begin{enumerate}[label=(\roman*)]
    \item Draw $J_m$ from $\pi(J_m \mid \J_{-m}, \tau, \lambda, y)\propto \pi(\J)p(y \mid  \J, \tau, \lambda)$ for $m=1,\dots,p$, where $\J_{-m}$ is the set of $\{J_k,k\ne m\}$ and $p(y \mid  \J, \tau, \lambda)$ is the marginal likelihood given by 
    \begin{align*}
    p(y \mid  \J, \tau, \lambda)
    &\propto 
    \big| I_{J-1} - \Omega_{\J,\tau,\lambda}^{-1} \tilde B_\J^T \tilde B_\J
    \big|^{1/2} \\
    &\quad\times
    \left(
    b_\sigma + 
        \frac{1}{2} \left[y^T y - \frac{(y^T 1_n)^2}{n + \kappa^2} -
        y^T \tilde B_\J
        \Omega_{\J,\tau,\lambda}^{-1}
       \tilde B_\J^T y
        \right] \right)^{-(a_\sigma + n/2)},
    \end{align*}
with $\Omega_{\J,\tau,\lambda} = 
           (1-\tau)\lambda^{-1}  \tilde P_\J +
           (n + {\tau}/{\lambda} ) n^{-1} \tilde B_\J^T \tilde B_\J$.
We draw $J_m$ using the random walk Metropolis-Hastings algorithm by proposing a value based on its current value.

\item Draw $\sigma^2$ from $\pi(\sigma^2 \mid \J, \tau, \lambda,y)$, where 
$$\sigma^2  \mid  \J, \tau, \lambda, y \sim
\text{IG} \!\left( a_\sigma+\frac{n}{2} ,  
            b_\sigma+
            \frac{1}{2}
            \left[y^T y - \frac{(y^T 1_n)^2}{n + \kappa^2} -
            y^T \tilde B_\J
            \Omega_{\J,\tau,\lambda}^{-1}
            \tilde B_\J^T y
            \right] \right)
        .$$
\item Draw $(\theta_0,\tilde\theta_\J)$ from $\pi(\theta_0,\tilde\theta_\J \mid  \J, \sigma^2, \tau, \lambda, y)$, where
\begin{align*}
          \theta_0  \mid  \J, \sigma^2, \tau, \lambda, y
          &\sim 
          \text{N} \!\left( y^T 1_n / (n + \kappa^{-2}) , \sigma^2 / (n + \kappa^{-2})\right),\\
          \tilde\theta_\J  \mid  \J, \sigma^2, \tau, \lambda, y
          &\sim 
          \text{N}_{J-1} 
          \!\left( 
          \Omega_{\J,\tau,\lambda}^{-1}
          \tilde B_\J^T y, 
          \sigma^2 \Omega_{\J,\tau,\lambda}^{-1} \right).
      \end{align*}
\item Draw $\tau$ from $\pi(\tau  \mid  \J, \sigma^2, \theta_0,\tilde\theta_\J, \lambda,y) \propto \pi(\tau) \pi(\tilde\theta_\J  \mid  \J, \sigma^2, \tau, \lambda)$ using grid sampling. Specifically, we obtain
\begin{align} \label{gridsampling}
	\begin{split}
		\log\pi(\tilde\theta_\J  \mid  \J, \sigma^2, \tau, \lambda)&=\frac{1}{2} \sum_{k=1}^{J-1} \log \!\left(1 +  \frac{(1-\tau)n}{\tau} \rho_k \!\left( (\tilde B_\J^T \tilde B_\J)^{-1} \tilde P_\J \right) \right) \\
		&\quad + 
		\frac{J-1}{2} \log \tau  - 
		\frac{1-\tau}{2 \lambda \sigma^2} \tilde\theta_\J^T \tilde P_\J \tilde\theta_\J -
		\frac{\tau}{2 n \lambda \sigma^2} \tilde\theta_\J^T \tilde B_\J^T \tilde B_\J \tilde\theta_\J + c,
	\end{split}
\end{align}
where $\rho_k(\cdot)$ denotes the $k$th eigenvalue of a matrix in a decreasing order and $c$ is a constant independent of $\tau$. The sampling is highly efficient since it needs matrix operations and eigenvalue computations only once during the evaluation of the log density at grid points of $\tau$.
The details are provided in Appendix~\ref{appendix_algorithm}.
\item Draw $\gamma$ and $\lambda$ conditionally from $\pi(\gamma, \lambda  \mid  \J, \sigma^2, \theta_0,\tilde\theta_\J, \tau, y) $ using their full conditional distributions, where $\gamma$ is an auxiliary variable constructed via data augmentation. Specifically,
\begin{align*}
    \gamma\mid \lambda , \J, \sigma^2, \theta_0,\tilde\theta_\J, \tau,  y &\sim \text{Unif}(0,h(\lambda;c_\lambda)),\\
\lambda  \mid  \gamma , \J, \sigma^2, \theta_0,\tilde\theta_\J, \tau,  y & \sim \text{IG}_{(0,h^{-1}(\gamma;c_\lambda))}\!\left (\frac{J-3}{2}, \frac{1}{2 \sigma^2}\tilde\theta_\J^T \left((1-\tau) \tilde P_\J + \tau n^{-1} \tilde B_\J^T \tilde B_\J \right) \tilde\theta_\J \right),
\end{align*}
where $h(\cdot;c_\lambda)$ denotes the exponential density with a rate parameter $c_\lambda$ and $\text{IG}_A $ represents an inverse gamma distribution truncated to $A$. Observe that $h^{-1}$ is available in a closed form. This sampling scheme is valid since $J\ge 4$. The details of the derivation are provided in Appendix~\ref{appendix_algorithm}.
\end{enumerate}

\section{Posterior Contraction Rates} \label{sec:theory}

In this section, we investigate a theoretical aspect of the proposed method. Specifically, we establish that our proposed method for multivariate nonparametric regression achieves the optimal posterior contraction rate in the minimax sense up to a logarithmic factor. Coupled with a given semi-metric for the parameter, a posterior contraction rate characterizes the speed at which the posterior distribution contracts towards the true parameter of the data distribution. We show that the proposed method is a fully adaptive procedure within the Bayesian framework, as it achieves the nearly optimal rate without requiring prior knowledge about the smoothness parameter of the true function. In this section only, the true parameters used for data generation are denoted by $f_0$ and $\sigma_0^2>0$.
Below, we outline the conditions required to achieve the optimal posterior contraction.

\begin{enumerate}[label=\rm(A\arabic*)]
\item \label{asm:holder} The true function $f_0$ belongs to the anisotropic H\"older space defined as
$$
\mathcal H^\alpha([0,1]^p)=\left\{f:[0,1]^p\rightarrow \mathbb R; \ \sup_{u\in[0,1]^p}\left |\frac{\partial^{\sum_{m=1}^p r_m}f(u)}{\partial u_1^{r_1}\dots \partial u_p^{r_p}}\right|<\infty, \ 0\le r_m\le \alpha_m, \ m=1,\dots,p\right\},
$$
where $\alpha=(\alpha_1,\dots,\alpha_p)^T\in\mathbb N^p$ is a smoothness parameter.

 \item \label{asm:spline} The order of B-splines satisfies $l_m\ge \alpha_m$, $m=1,\dots,p$.
\item \label{asm:eigen} For every $\J$ such that $J\lesssim (n/\log n)^{p/(2\bar\alpha+p)} $, the design points satisfy $\log\rho_{\min}(B_\J^T B_\J)\gtrsim -\log n$, where $\bar\alpha^{-1}=p^{-1}\sum_{m=1}^p \alpha_m^{-1}$.
\item \label{asm:prior} The coefficients $(\theta_0,\tilde\theta_\J)$ follow the prior in \eqref{eqn:Covariance_proposedmethod}, and the variance $\sigma^2$ is assigned an inverse gamma prior.
The priors for $\tau$ and $\lambda$ satisfy $\log \Pi\{\tau<n^{-k}\}\lesssim -n$, $\log \Pi\{\lambda>\lambda_0\}\gtrsim -\log n$, and $\log\Pi\{\lambda>n^k\}\lesssim - n$  for some  $\lambda_0>0$ and $k>0$. The prior for $\mathcal J$ satisfies $\log\pi(\J)\asymp -J\log J$ for every $\J$ such that $J\gtrsim (n/\log n)^{p/(2\bar\alpha+p)}$.

\end{enumerate}
Assumption~\ref{asm:holder} is conventional in the literature on multivariate nonparametric regression. It is well known that for any $f_0\in\mathcal H^\alpha([0,1]^p)$, the tensor product of B-splines yields the optimal approximation error $\sum_{m=1}^p J_m^{-\alpha_m}$ with respect to the supremum norm, provided that the spline order satisfies $l_m\ge \alpha_m$ as required in Assumption~\ref{asm:spline} \citep[Chapter 12,][]{schumaker2007spline}.
Assumption~\ref{asm:eigen} imposes a minimal condition on the regularity of the design points. Specifically, if the design points form a regular grid on $[0,1]^p$, then we obtain $\rho_{\min}(B_\J^T B_\J)\gtrsim n/J$ for every $\J$ such that $J\lesssim (n/\log n)^{p/(2\bar\alpha+p)} $ (see Remark~2.1 and Lemma~A.9 of \citet{yoo2016supremum}). Our condition in Assumption~\ref{asm:eigen} is significantly weaker than this. Assumption~\ref{asm:prior} outlines the required conditions for prior specification. In particular, the bounds $\log \Pi\{\tau<n^{-k}\}\lesssim -n$ and $\log\Pi\{\lambda>n^k\}\lesssim - n$ ensure exponential tail properties, while  $\log\pi(\J)\asymp -J\log J$ implies a Poisson-type tail behavior. The condition $\log \Pi\{\lambda>\lambda_0\}\gtrsim -\log n$ is very mild and is satisfied by any prior on $(0,\infty)$ that is independent of $n$. The priors described in Section~\ref{priorspecification} clearly satisfy these conditions.

We now study the contraction rate of the proposed method for multivariate nonparametric regression, using a suitably defined semi-metric between $(f,\sigma)$ and the true values $(f_0,\sigma_0)$. The following theorem formally characterizes the posterior contraction rate of the proposed method. It is well known that the minimax rate of convergence for function estimation over $\mathcal H^\alpha([0,1]^p)$ is $n^{-\bar\alpha/(2\bar\alpha+p)}$. The theorem demonstrates that the proposed method achieves a posterior contraction rate that is minimax-optimal up to a logarithmic factor, while adapting to the unknown smoothness parameter $\alpha$ through the prior. Consistent with the BBE literature, this adaptation is achieved by estimating the number of basis terms  \citep{de2012adaptive,belitser2014adaptive,shen2015adaptive}.  The theorem holds uniformly over $f_0\in\mathcal H^\alpha([0,1]^p)$ with $l_m\ge \alpha_m$, $m=1,\dots,p$.

\begin{theorem}[Posterior contraction]
\label{thm:rate}
Suppose that \ref{asm:holder}--\ref{asm:prior} hold.
For every $M_n\rightarrow \infty$, the posterior distribution satisfies
$$
\mathbb E_0\Pi\!\left\{\lVert f-f_0 \rVert_n+|\sigma-\sigma_0|>M_n ((\log n)/n)^{\bar\alpha/(2\bar\alpha+p)}\mid  y\right\}\rightarrow 0,
$$
where $\mathbb E_0$ is the expectation under the true measure with $f_0$ and $\sigma_0^2$.
\end{theorem}

The complete proof is provided in Appendix \ref{sec:thmproof}. Here, we briefly outline our approach. We utilize the testing-based method for posterior contraction theory \citep{ghosal2000convergence,ghosal2007convergence}, which requires a test function whose errors decay exponentially with respect to the chosen semi-metric quantifying the closeness of the parameters. Specifically, we combine an entropy bound with a test function developed by \citet{jeong2025l2norm} to construct a global test function over a suitably chosen sieve; see Lemma~\ref{lmm:globaltest}. A key technical challenge arises from the polynomial tails of the marginal $t$-prior for the coefficients, which are induced by an inverse gamma prior for $\sigma^2$. Heavy-tailed priors pose challenges in achieving Bayesian adaptation to unknown smoothness due to the insufficient decay of the prior probability assigned to a sieve \citep{shen2015adaptive}. Although one potential remedy involves using fractional posterior distributions \citep{jeong2021posterior,agapiou2024heavy}, the usual posterior distributions are typically of primary interest. By directly examining the marginal posterior of $\sigma^2$ to restrict its support, we show that the usual posterior with an inverse gamma prior for $\sigma^2$ achieves the optimal posterior contraction rate; see Lemma~\ref{lmm:sigma}.

\begin{remark}[Extension to non-hypercube domains]
	\label{rmk:domain}
In this study, we focus on the case where $f$ is defined on the hypercube $[0,1]^p$, as this restriction simplifies the presentation of the main ideas and results. If $f$ is instead defined on a smaller domain $\mathcal X\subsetneq [0,1]^p$, the prior in \eqref{eqn:gpriortransformed} might not be valid because $\tilde B_\J$ could be rank deficient. However, even in this situation, our framework can be readily extended to general domains. This extension is achieved by removing unnecessary basis terms from  \eqref{eqn:tensor}---which is equivalent to eliminating zero columns from $B_\J$---and then adjusting $D_{J_m}$ accordingly when constructing $P_{J_m}$. Assumption~\ref{asm:eigen} then only needs to be satisfied with the refined basis matrix, and the posterior contraction in Theorem~\ref{thm:rate} continues to hold. We employ this extension in analyzing the rainfall data in Section~\ref{sec:realdata}.
\end{remark}

\section{Simulation Study} 
\label{sec:simulation}

\subsection{Univariate Regression}
\label{sec:simunireg}
We first conduct a simulation study focusing on regression with univariate nonparametric functions. This section demonstrates that the proposed method effectively addresses the bias issue in BBE for infinite-dimensional functions caused by underfitting. For the data-generating nonparametric function $f$ in \eqref{eqn:nonparamreg}, we use the following trigonometric test functions mapping from $[0,1]$ to $\mathbb R$: $f_1(u)=1 + \sin(2\pi u)$ and $f_2(u)=1 + \sin(2\pi u)+\sin(10\pi u)/4$, along with their $L_2$-projections, $f_1^\ast$ and $f_2^\ast$, onto the cubic spline space with suitably chosen numbers of equally spaced knots. Specifically, $f_1^\ast$ approximates $f_1$ with two knots, while $f_2^\ast$ approximates $f_2$ with 15 knots.
Therefore, the spline approximations $f_j^\ast$ are included in the candidate models for BMS, whereas the transcendental functions $f_j$ require infinitely many piecewise polynomial basis terms for representation. The functions are visualized in Figure~\ref{plot:function}. Although the differences between $f_j$ and $f_j^\ast$ appear marginal in the figure, we will see that the estimation results for these two types of functions differ substantially.
For each test function, we consider six simulation scenarios with combinations of $\sigma^2 \in \{0.1^2, 0.5^2\}$ and $n \in \{200,2000,20000\}$. The design points $x_i$ are independently generated from the uniform distribution on $(0,1)$.

 \begin{figure}[t!]
	\centering
	\includegraphics[width = 5in]{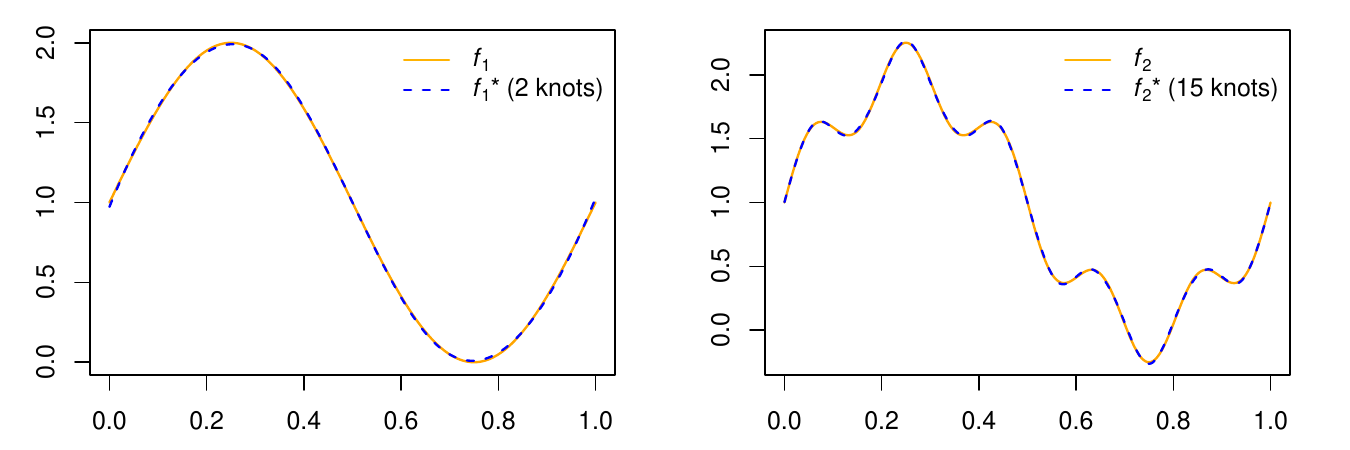}
	\caption{Trigonometric test functions and their cubic spline approximations with equally spaced knots.}
	\label{plot:function}
\end{figure}

We compare the proposed method to BBE with the g-prior defined in \eqref{eqn:gpriortransformed}. For the dispersion parameter $\lambda$, we assign an inverse gamma prior $\text{IG}(1/2,1/2)$, known as the Zellner-Siow prior \citep{zellner1980posterior}, which provides a closed-form expression of the marginal likelihood $p(y \mid \J)$. Although other mixtures of g-priors, such as the beta-prime prior \citep{maruyama2011fully} and the robust prior \citep{bayarri2012criteria}, were also examined, no significant differences in performance were observed. Thus, we include only BBE with the Zellner-Siow prior as a competitor to the proposed method. 
Because this subsection focuses exclusively on how the proposed method addresses the issues of BBE, other function estimation methods are not considered here but will be included in Section~\ref{sec:simmultreg} for multivariate regression.

For the proposed method, we use the prior distributions specified in Section~\ref{priorspecification} with suitably chosen hyperparameters. Specifically, we set $c_\tau=10^4$, which drives the prior for $\tau$ toward zero at a reasonable rate based on our observations. For the exponential prior on $\tau$, we set $c_\lambda=0.32$, which minimizes the Hellinger distance between the exponential distribution and $\text{IG}(1/2,1/2)$. 
This choice is motivated by minimizing the difference in the influence of the priors for $\lambda$ between the proposed method and BBE with the Zellner-Siow prior. For the prior on $\mathcal J$, we set $\nu=100$, which places most of the mass on $J=J_1\le30$ and ensures a reasonable decay in the tails. To allow for a fair comparison, the same prior with $\nu=100$ is also assigned to $\mathcal J$ in BBE. We also tested other values for $c_\tau$, $c_\lambda$, and $\nu$ within reasonable ranges but observed no significant differences in the results.
For each method within every simulation scenario, we generate 500 replications of datasets.  Based on these replications, we evaluate the MSEs and the coverage probabilities of the 95\% pointwise credible bands of BBE and the proposed method. We also calculate the marginal posterior means of $J$ and $\tau$ to understand the performance differences between the two methods.

 \begin{figure}[p!]
	\centering
	\begin{subfigure}[b]{3.1in}
	\includegraphics[width=\linewidth]{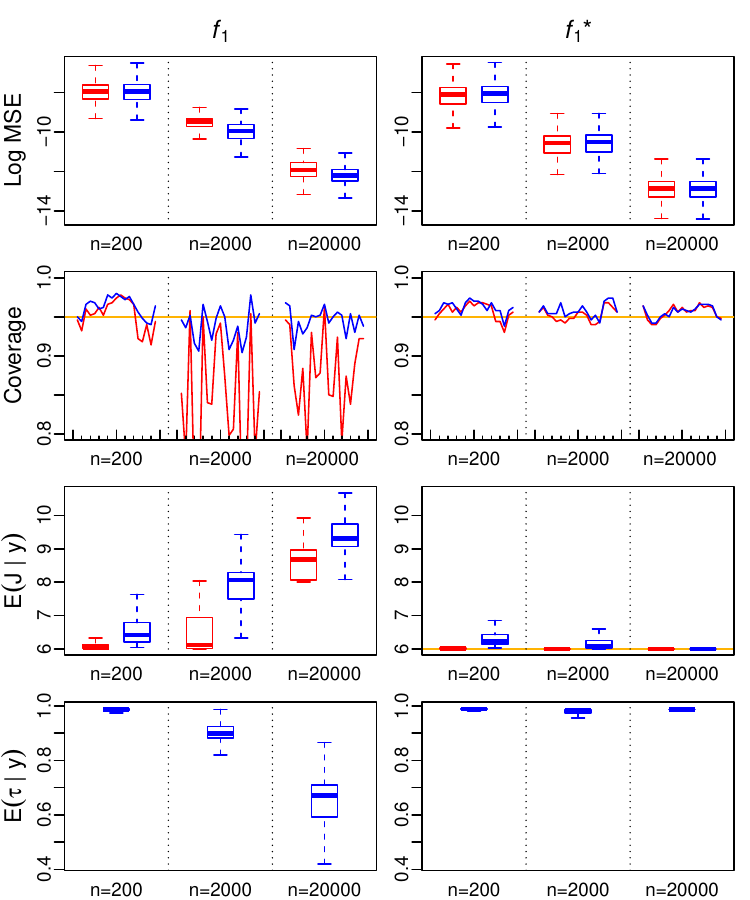}
	\caption{Performance for $f_1$ and $f_1^\ast$ with $\sigma^2=0.1^2$.}
	\label{plot:sim.univ.a}
	\end{subfigure} 
	\begin{subfigure}[b]{3.1in}
	\includegraphics[width=\linewidth]{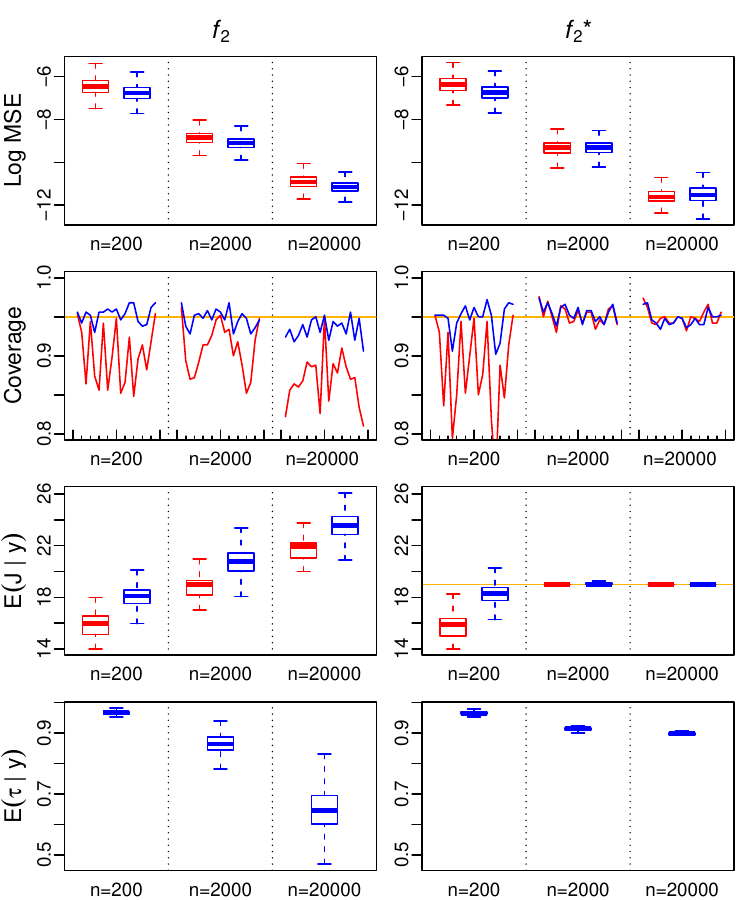}
	\caption{Performance for $f_2$ and $f_2^\ast$ with $\sigma^2=0.1^2$.}
	\label{plot:sim.univ.b}
\end{subfigure} 
	\begin{subfigure}[b]{3.1in}
	\includegraphics[width=\linewidth]{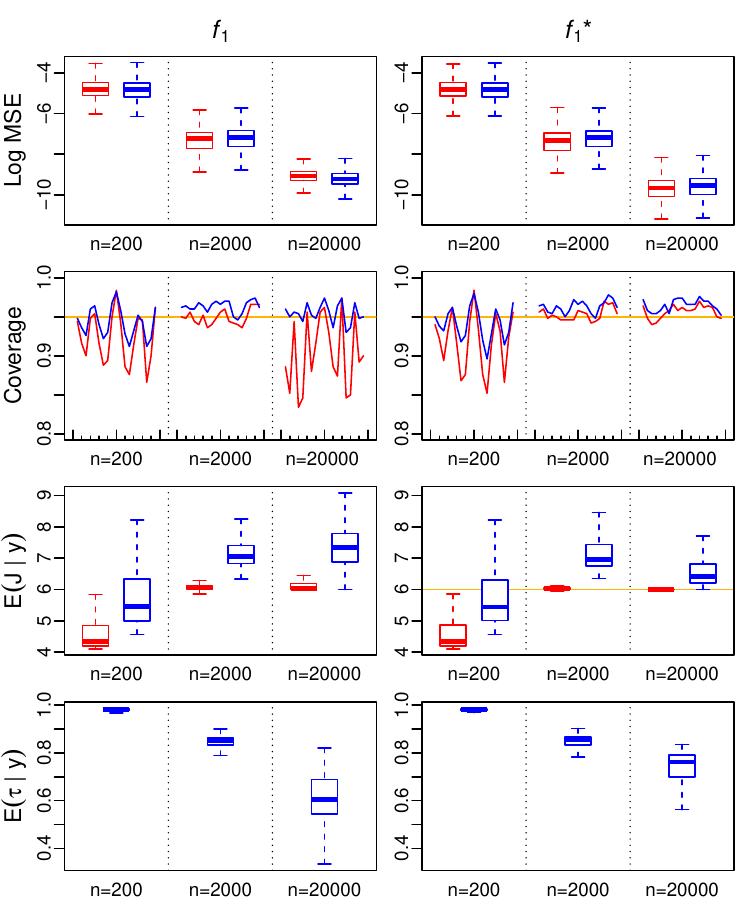}
	\caption{Performance for $f_1$ and $f_1^\ast$ with $\sigma^2=0.5^2$.}
	\label{plot:sim.univ.c}	
\end{subfigure} 
	\begin{subfigure}[b]{3.1in}
	\includegraphics[width=\linewidth]{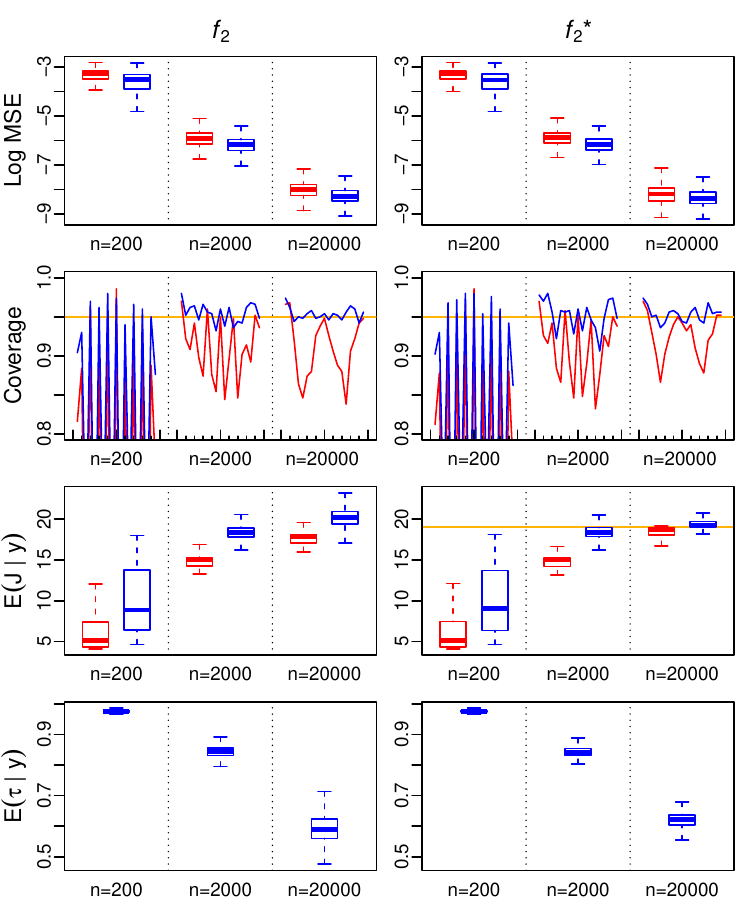}
	\caption{Performance for $f_2$ and $f_2^\ast$ with $\sigma^2=0.5^2$.}
	\label{plot:sim.univ.d}
\end{subfigure} 
	\caption{Performance measures for the univariate regression obtained based on 500 replications: BBE in red ({\color{red}$\blacksquare$}) and the proposed method in blue ({\color{blue}$\blacksquare$}). Each row shows the logarithm of the MSE, the coverage probabilities of the 95\% pointwise credible band on $(0,1)$, the posterior mean of $J$, and the posterior mean of $\tau$. Gold horizontal lines denote the reference values: 95\% coverage and the true $J$ for $f_1^\ast$ and $f_2^\ast$.}
	\label{plot:sim.univ}
\end{figure}

Figure~\ref{plot:sim.univ} summarizes the simulation results. Although Figure~\ref{plot:function} shows only marginal differences between $f_j$ and $f_j^\ast$, the estimation results in Figure~\ref{plot:sim.univ} differ significantly across the two types of functions. This discrepancy arises because the expression for $f_j^\ast$ is included among the candidates for BMS in BBE, leading to nested model selection---a framework where the Zellner–Siow prior has been shown to be effective \citep{liang2008mixtures}. In the higher signal-to-noise ratio (SNR) scenario with $\sigma^2=0.1$, BBE performs reasonably well for $f_1^\ast$ and $f_2^\ast$ but significantly underperforms for $f_1$ and $f_2$, particularly with respect to coverage probability. 
As shown in the panel for the posterior mean of $J$, this performance gap occurs because BBE correctly specifies the values of $J$ for $f_1^\ast$ and $f_2^\ast$, identifying the true models among the candidates. In contrast, BBE produces significant biases for $f_1$ and $f_2$ owing to the inability to precisely identify infinite-dimensional functions.
In comparison, the proposed method performs well for both the trigonometric functions $f_j$ and their spline approximations $f_j^\ast$. For $f_1$ and $f_2$, it addresses the bias issue of BBE by employing more basis terms while avoiding overfitting through additional roughness regularization. For $f_1^\ast$ and $f_2^\ast$, the proposed method behaves similarly to BBE when BBE is effective (by inferring $\tau$ tending toward 1) and outperforms BBE when it underperforms---for example, in the case of $f_2^\ast$ with $n=200$. In the lower SNR scenario with $\sigma^2=0.5^2$, BBE is even less effective than the higher SNR scenario, particularly for $f_2$ and $f_2^\ast$, which exhibit more complex signal variations. In contrast, the proposed method consistently achieves superior performance across all cases.

\subsection{Multivariate Regression}
\label{sec:simmultreg}

Next, we carry out a simulation study involving multivariate nonparametric functions with $p>1$. Unlike Section~\ref{sec:simunireg}, here we compare the proposed method with other Bayesian competitors by assessing performance in terms of MSEs.
We consider the following bivariate and trivariate functions defined on $[0,1]^p$:
\begin{align*}
	f_3(u_1,u_2) &= 1 + \exp({u_1}/({u_2+1})), \\
	f_4(u_1,u_2) &= 1 + \sin(2\pi (u_1+u_2) -{\pi}/{2}), \\
	f_5(u_1,u_2) &= 1 + \frac{1}{3} \!\left(\phi_{1/3,1/10}(u_1) + \phi_{2/3,1/10}(u_1)  + \frac{1}{2}\phi_{1/2,1/20}(u_1)\right)\sin(\pi u_2), \\
	f_6(u_1,u_2,u_3) &= 1 + \sqrt{u_1^2 + (u_2 u_3 - 1/(1+u_2 u_3))^2},\\
	f_7(u_1,u_2,u_3) &= 1 + \sin(2\pi(u_1+u_2)) \cos(2\pi(u_2+u_3)),
\end{align*}
where $\phi_{\mu,\sigma}$ denotes the density of a normal distribution with mean $\mu$ and standard deviation $\sigma$. The bivariate functions $f_3$, $f_4$, and $f_5$ are visualized in Figure~\ref{plot:posteriormeancurves} (from top to bottom). 
Similar to the univariate cases, we consider six simulation scenarios with combinations of $\sigma^2 \in \{0.1^2, 0.5^2\}$ and $n \in \{400,4000,40000\}$, and the design points $x_i$ are independently generated from the uniform distribution on $(0,1)^p$. 

For the Bayesian competitors, we include an anisotropic version of Bayesian P‑splines \citep{bach2025anisotropic}, BART \citep{chipman2010bart}, and BBE with the Zellner–Siow prior. Each method adapts to anisotropic smoothness, ensuring a fair comparison with the proposed approach. The prior specification for our method follows that in Section~\ref{sec:simunireg}, except that we set $\nu=100^p$ to account for the multivariate nature of the functions in the prior for $\mathcal{J}$. For each simulation scenario, we generate 25 replications of the datasets, and we evaluate the MSEs of the methodologies based on these replications.

\begin{figure}[t!]
	\centering
	\begin{subfigure}[b]{2.9in}
		\includegraphics[width=\linewidth]{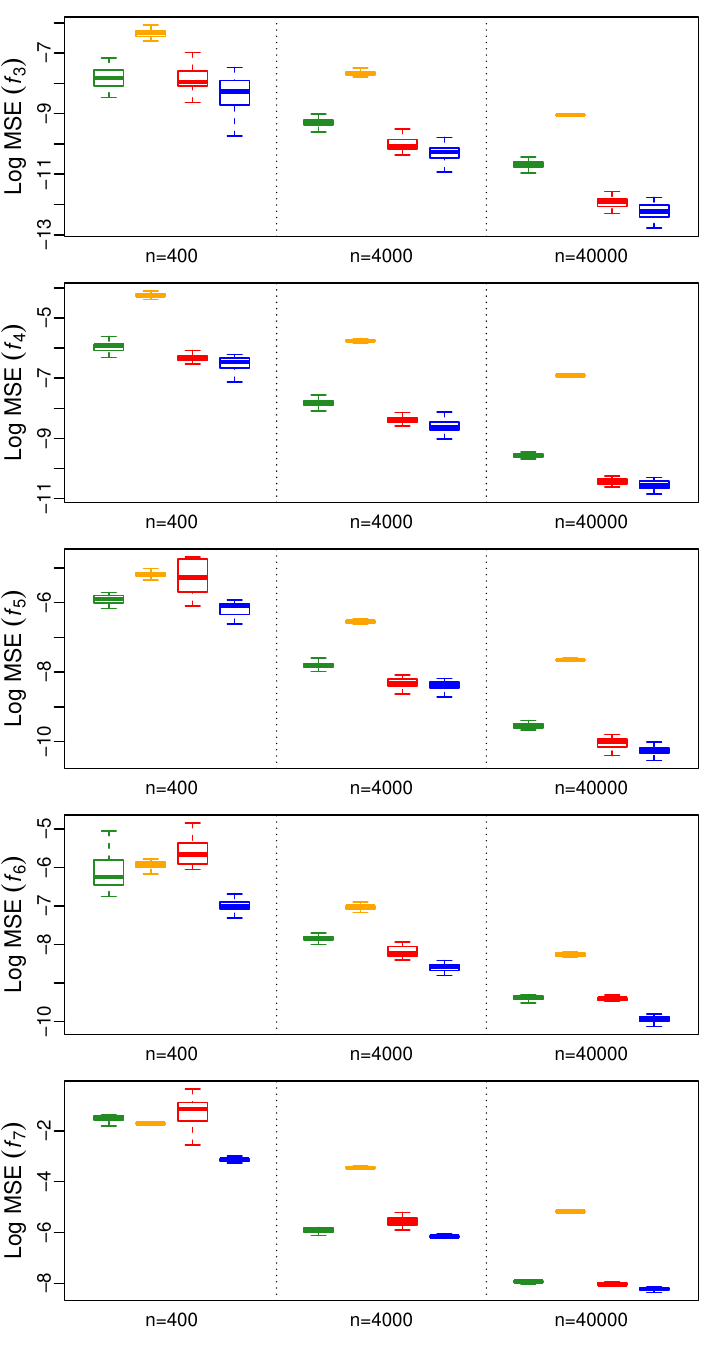}
		\caption{Log of MSEs for $f_3$--$f_7$ with $\sigma^2=0.1^2$.}
		\label{plot:sim.mult.a}
	\end{subfigure} 
	\begin{subfigure}[b]{2.9in}
		\includegraphics[width=\linewidth]{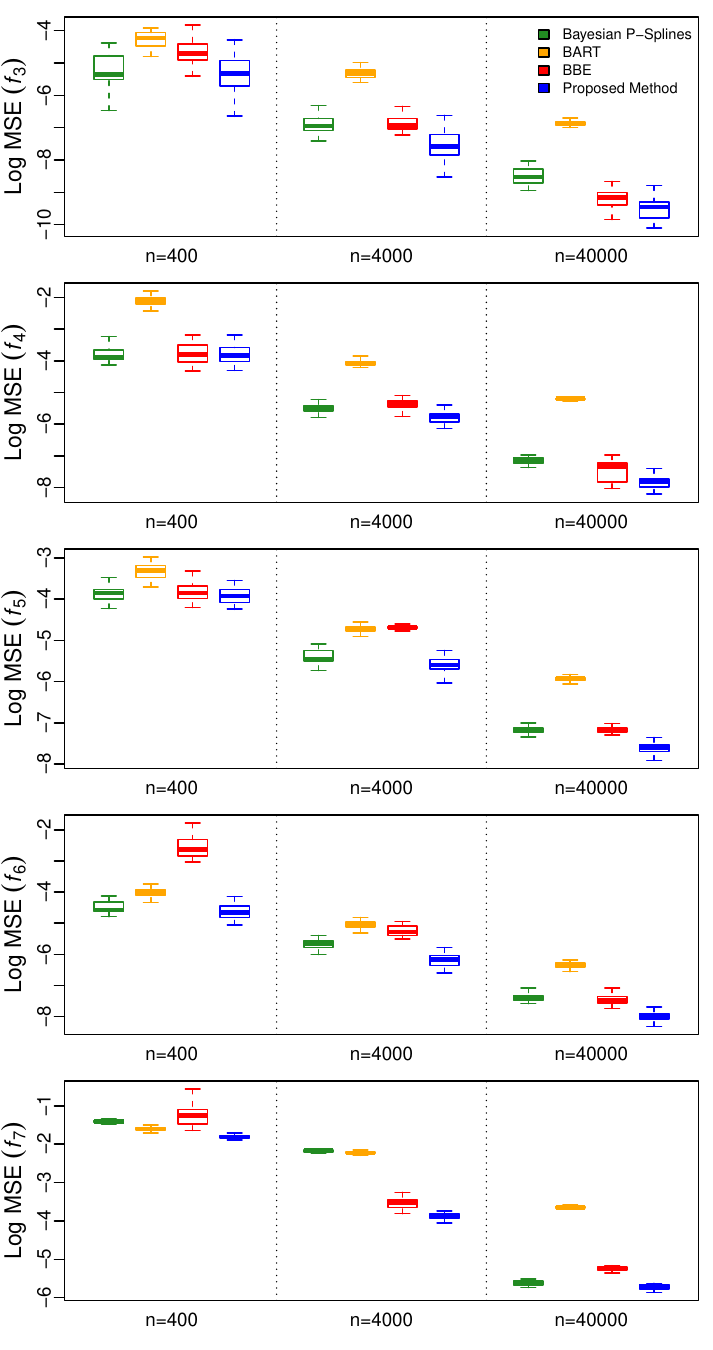}
		\caption{Log of MSEs for $f_3$--$f_7$ with $\sigma^2=0.5^2$.}
		\label{plot:sim.mult.b}
	\end{subfigure} 
	\caption{Logarithm of the MSEs for the multivariate regression obtained based on 25 replications.}
	\label{plot:sim.mult}
\end{figure}

The results are summarized in Figure~\ref{plot:sim.mult}. It is difficult to identify a clear winner between Bayesian P-splines and BBE, as their relative performance varies with the simulation setup. In contrast, BART generally underperforms compared to the other methods, likely because it is a piecewise learner with splits occurring only along coordinate axes. Nevertheless, its performance is expected to improve in higher-dimensional settings \citep{jeong2023art}. Overall, the proposed method consistently achieves the lowest MSE among the competitors, underscoring its effectiveness in smooth function estimation for multivariate regression.

\section{Application to Real Datasets}
\label{sec:realdata}

In this section, we apply the proposed method to two real datasets: the rainfall dataset and the CO2 concentration dataset. The rainfall dataset provides average rainfall measurements (in tenths of millimeters) for June, July, and August across North America. For our analysis, we focused on a subset of $n = 1197$ observations from the United States, with rainfall as the response variable and longitude and latitude as predictors. The CO2 concentration dataset offers synthetic CO2 measurements (in ppm) from $n = 26633$ locations worldwide. For regression analysis, CO2 concentration is used as the response, while longitude and latitude are used as predictors. Both datasets are available in the R package \texttt{fields}. The observations are visualized in Figure~\ref{fig:data}.

   \begin{figure}[t!]
	\centering
	\includegraphics[width = 2.8in]{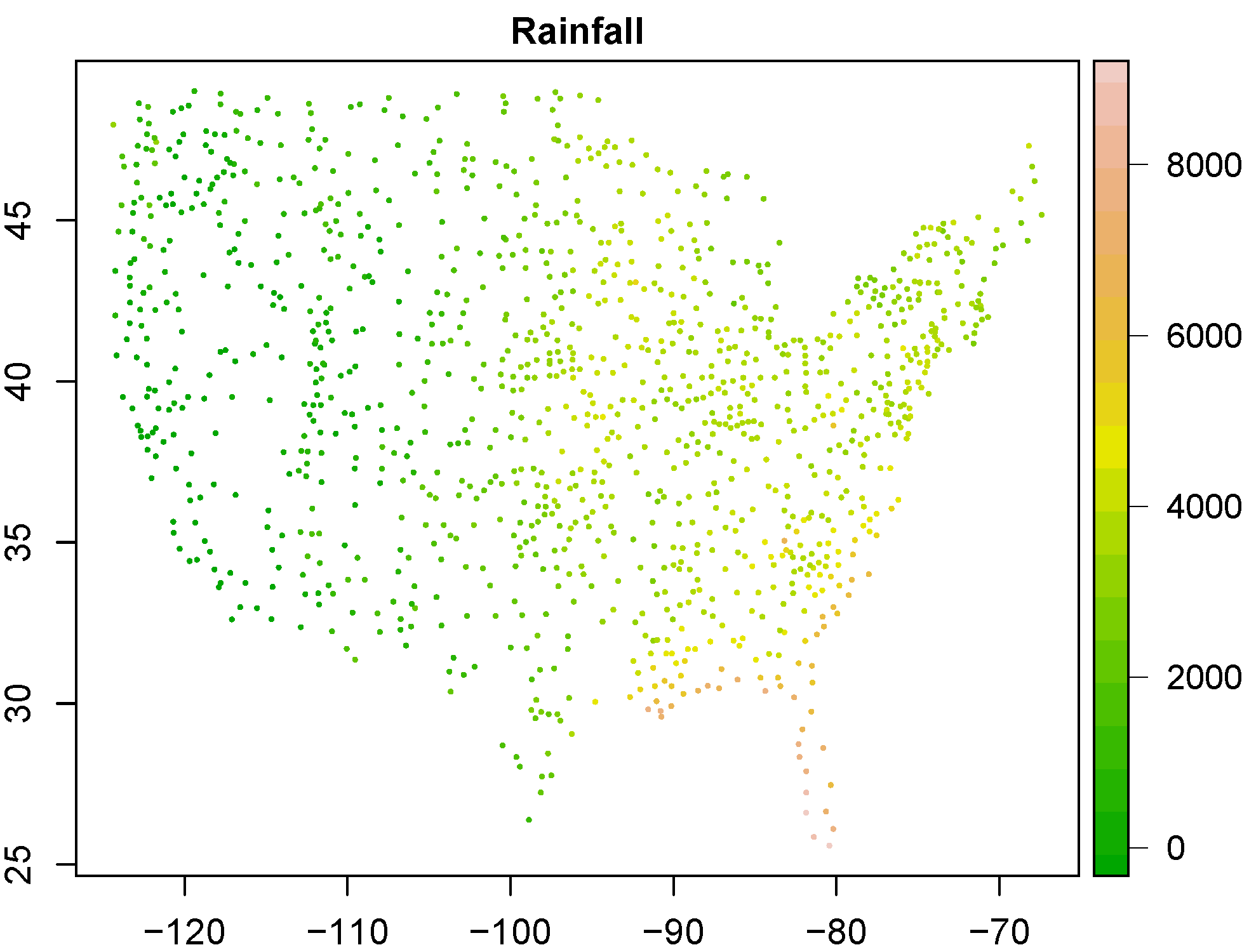} ~ 
	\includegraphics[width = 2.8in]{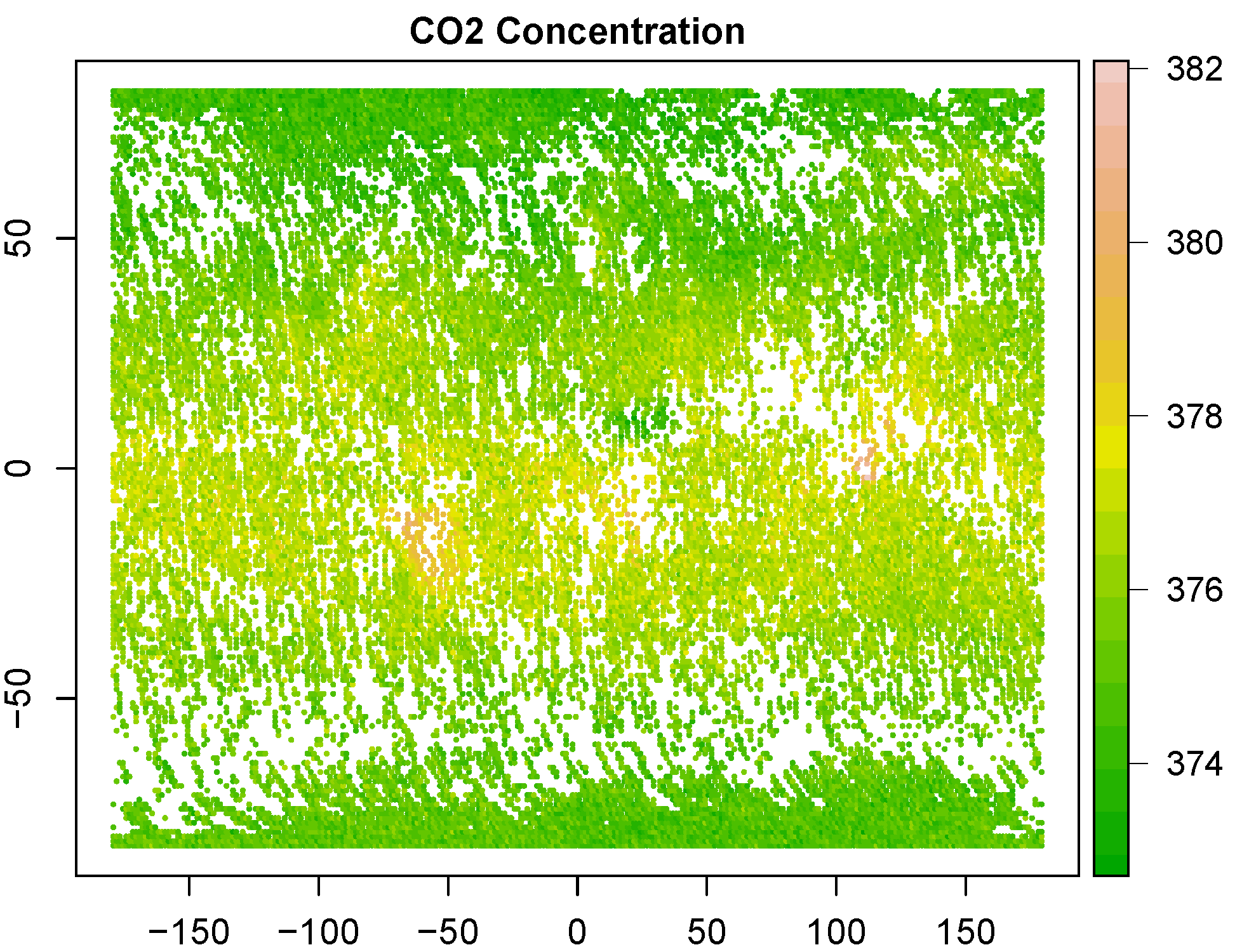}
	\caption{Observations in the datasets.}
	\label{fig:data}
\end{figure}

Figure~\ref{fig:realanalysis} presents the pointwise posterior means of the target functions obtained using each method described in Section~\ref{sec:simulation}, along with the cross-validation (CV) error distribution from a 20-fold CV. The CV errors are computed as $(q^{-1}\sum_{i=1}^q (y_i^\ast -\hat f^\ast(x_i^\ast))^2)^{1/2}$, where $(y_i^\ast,x_i^\ast)_{i=1}^q$ represents the test data in each CV iteration and $\hat f^\ast$ denotes the posterior mean obtained from the corresponding training data. Accordingly, lower CV errors indicate better predictive performance.
An examination of the estimated functions reveals that BBE tends to overlook certain local signals compared to the other methods. In contrast, BART effectively captures these signals, albeit at the expense of smoothness in its estimates. Both the Bayesian P-splines and the proposed method successfully identify local signals while preserving smoothness, with their function estimates appearing comparable. Notably, the proposed method yields the smallest CV errors among all competitors, thereby demonstrating the best predictive performance.

\begin{figure}[t!]
	\begin{subfigure}[b]{\linewidth}
	\centering	
	\includegraphics[width = 1.2in]{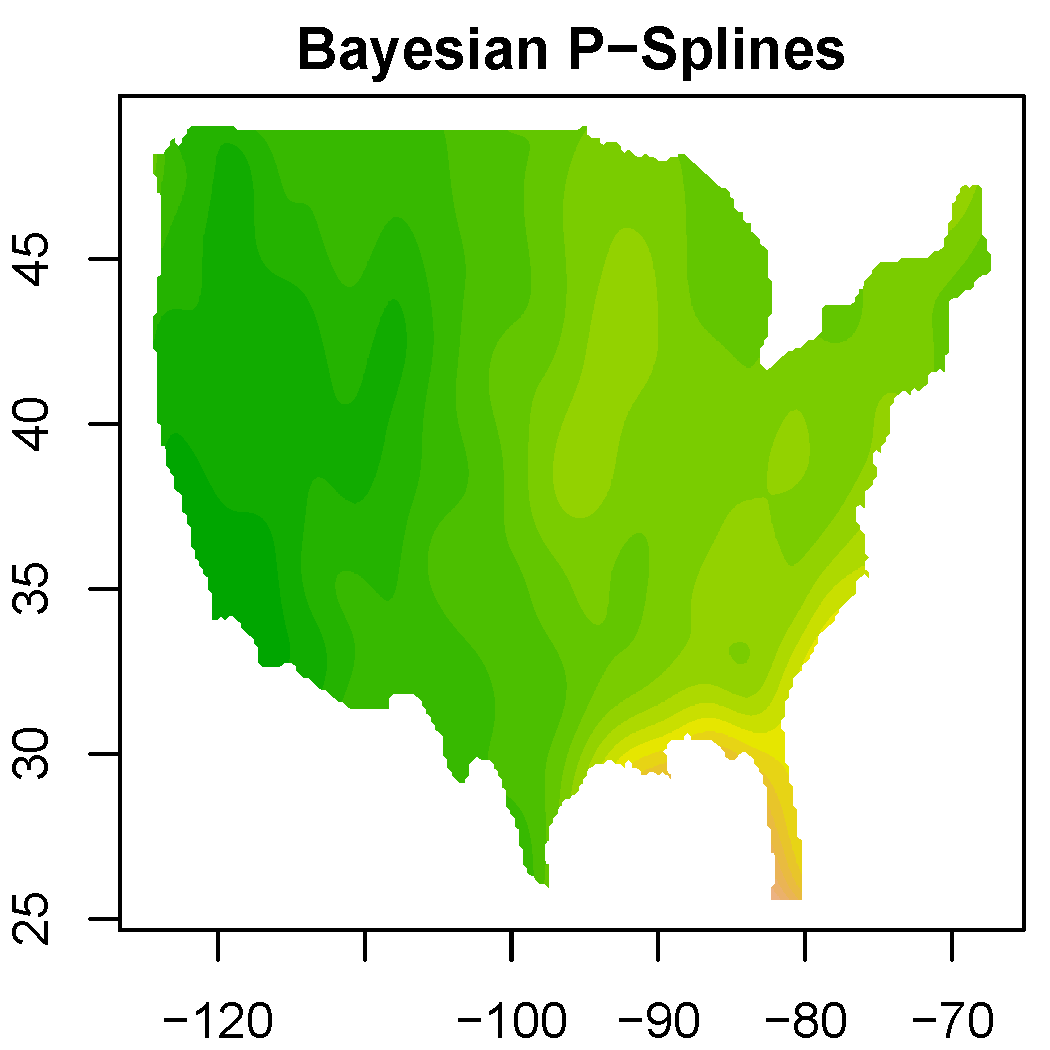}
\includegraphics[width = 1.2in]{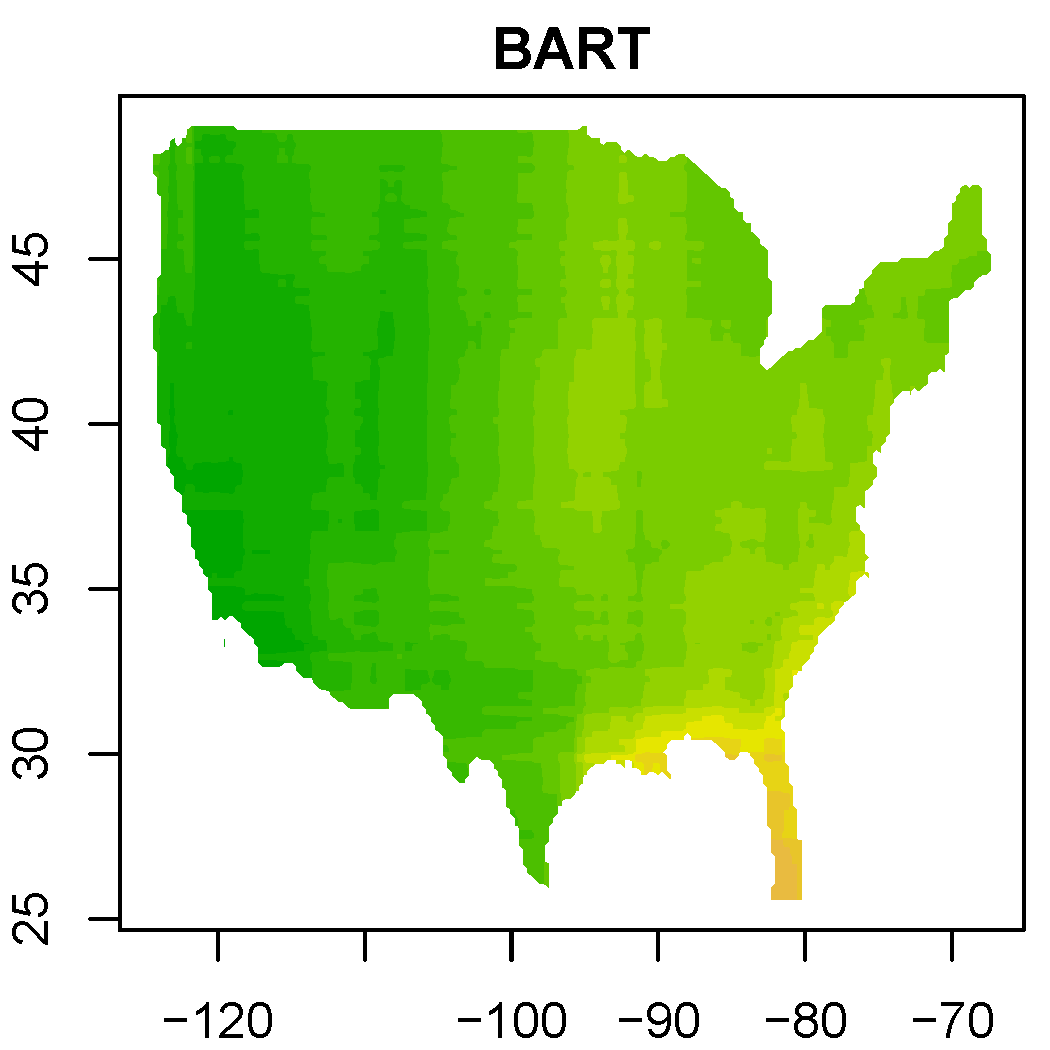}
\includegraphics[width = 1.2in]{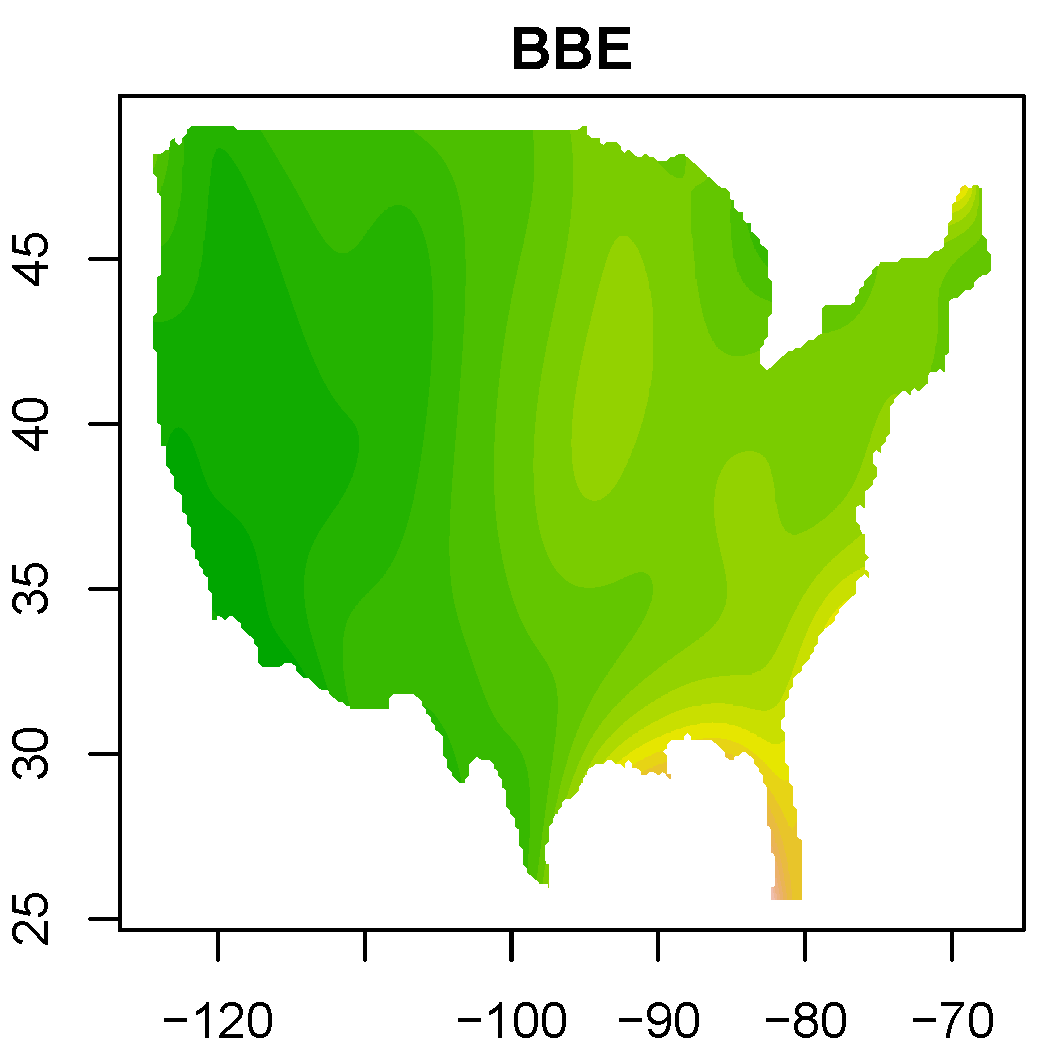}
\includegraphics[width = 1.2in]{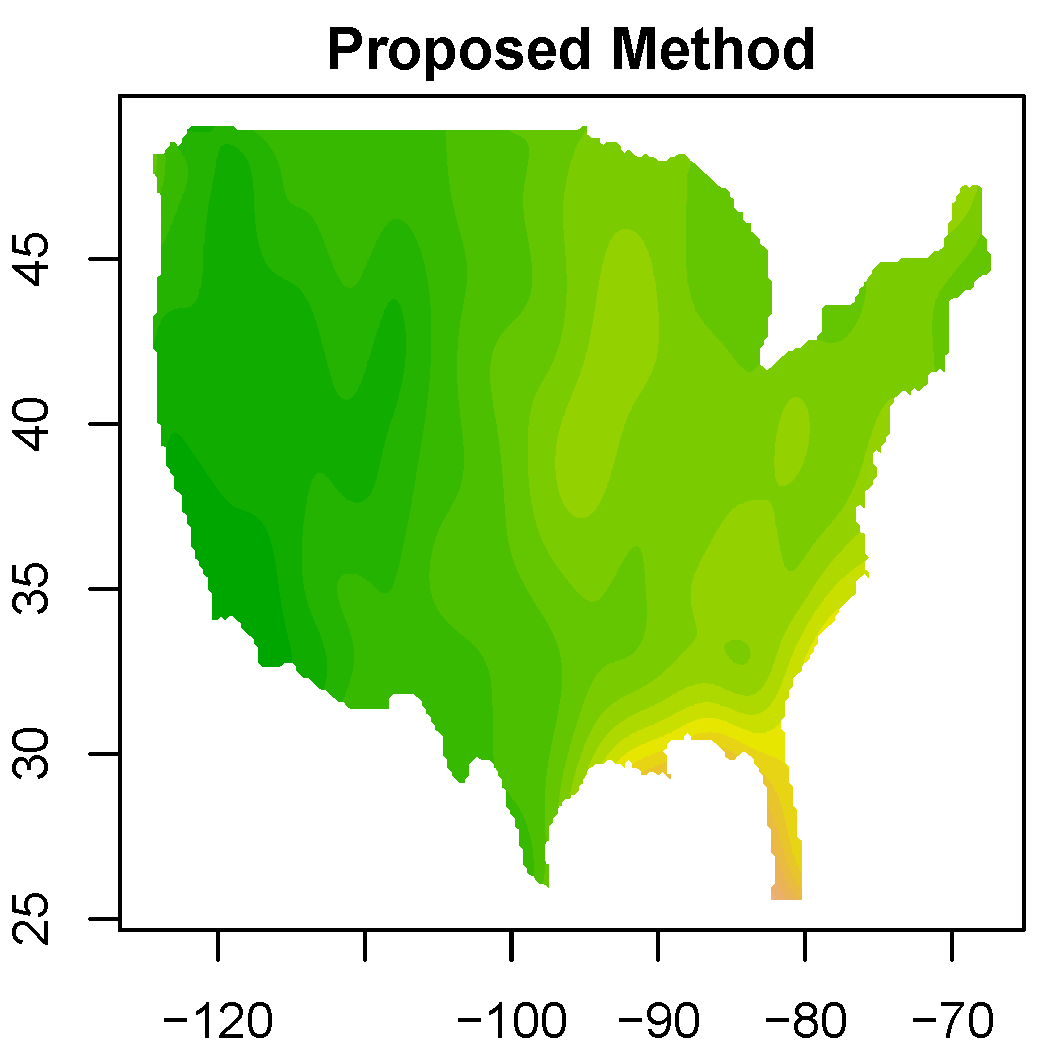}
\includegraphics[width = 1.2in]{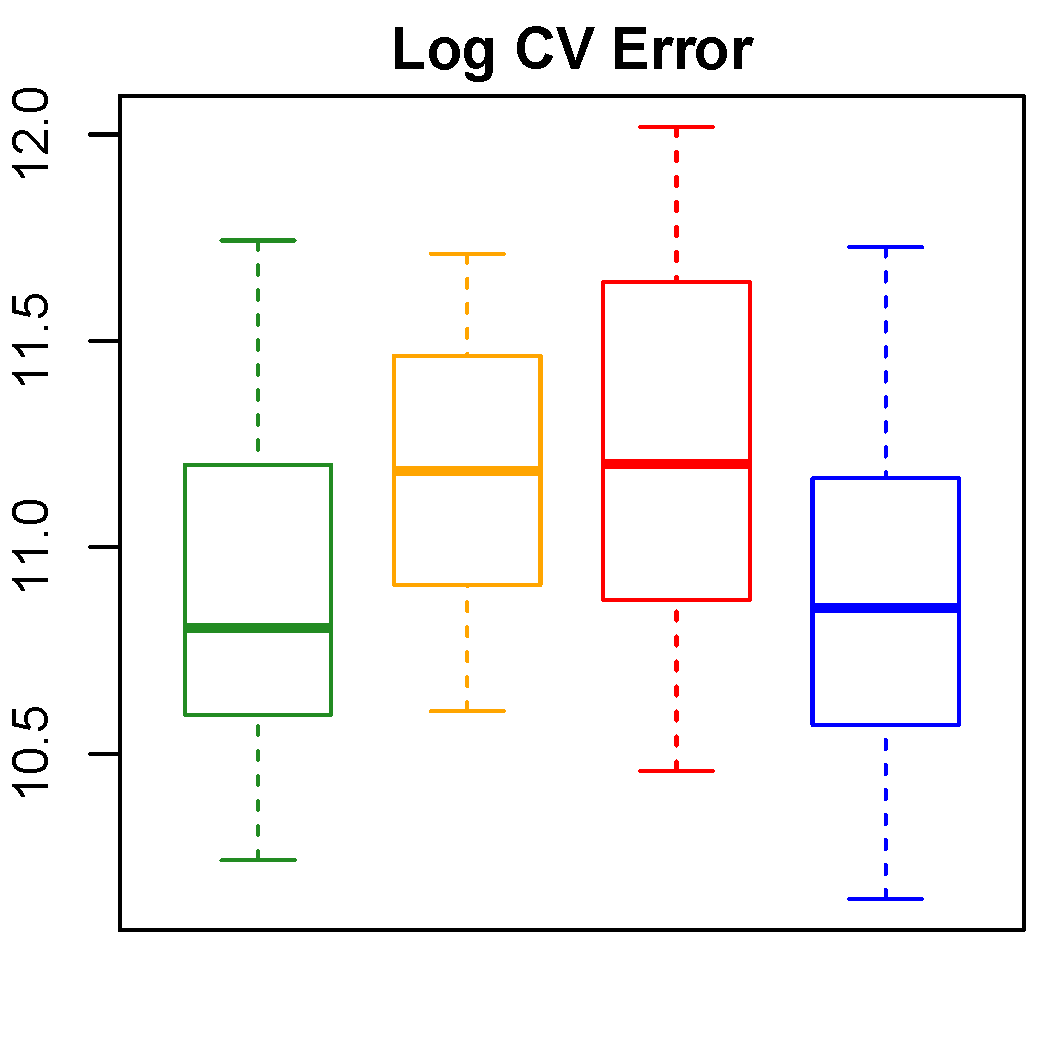}
		\caption{Analysis of Rainfall dataset.}
		\vspace*{0.1in}
	\end{subfigure}
	\begin{subfigure}[b]{\linewidth}
	\centering	
	\includegraphics[width = 1.2in]{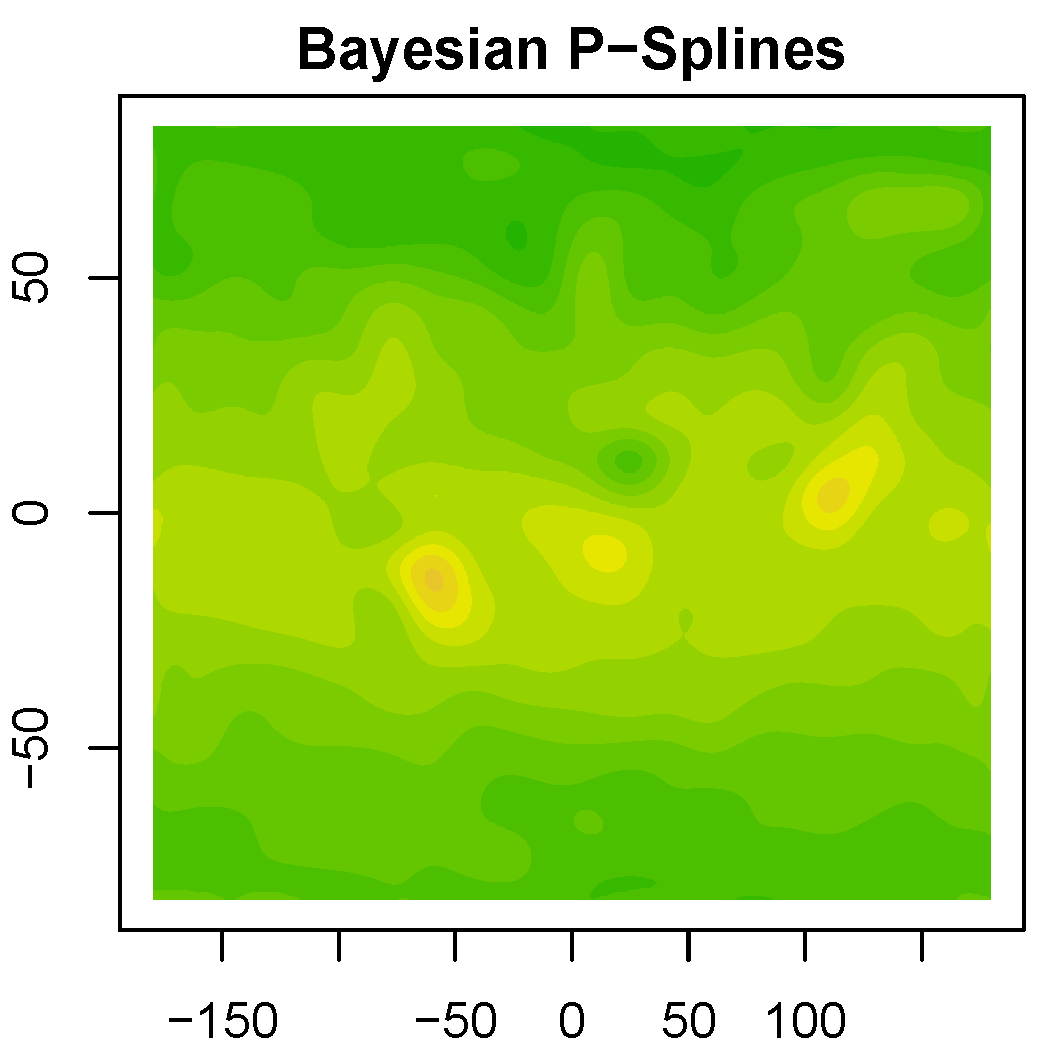}
\includegraphics[width = 1.2in]{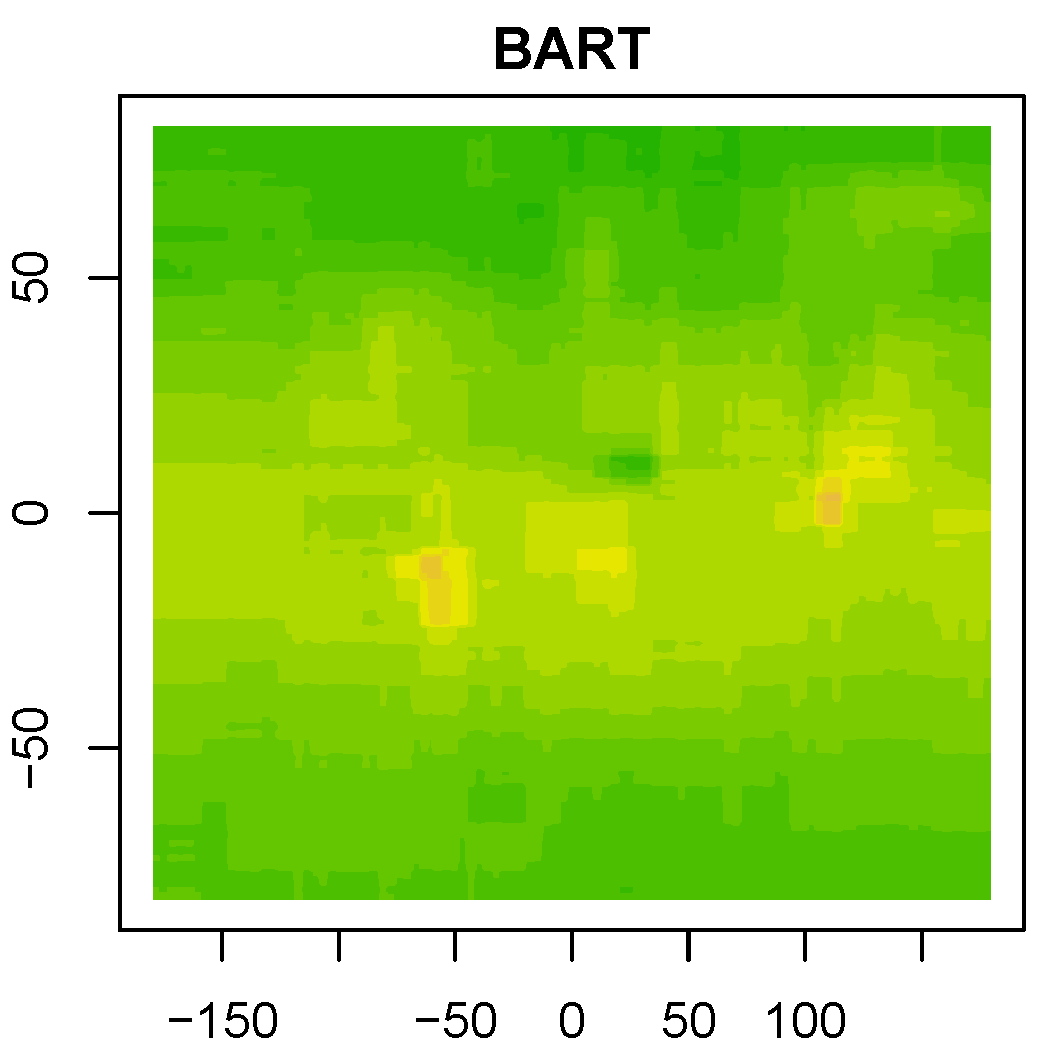}
\includegraphics[width = 1.2in]{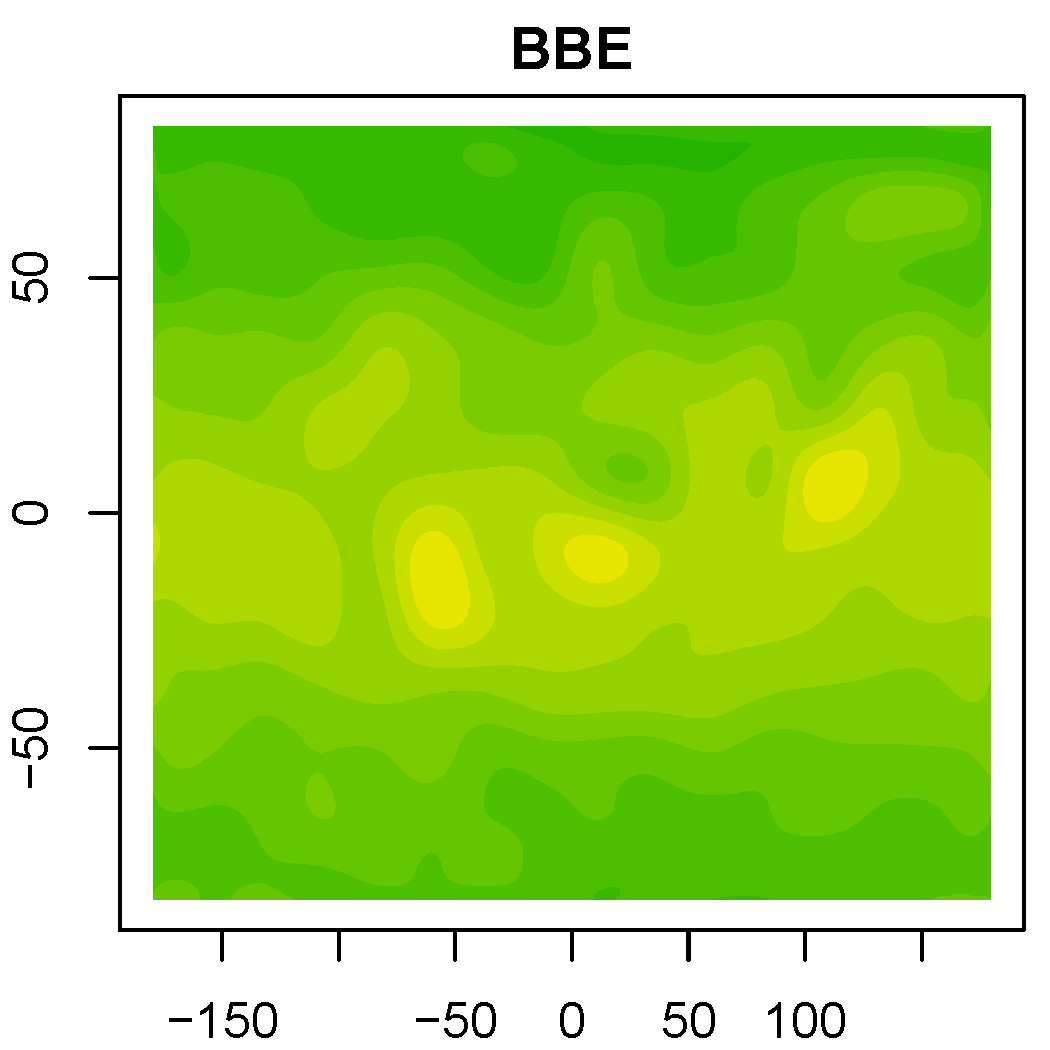}
\includegraphics[width = 1.2in]{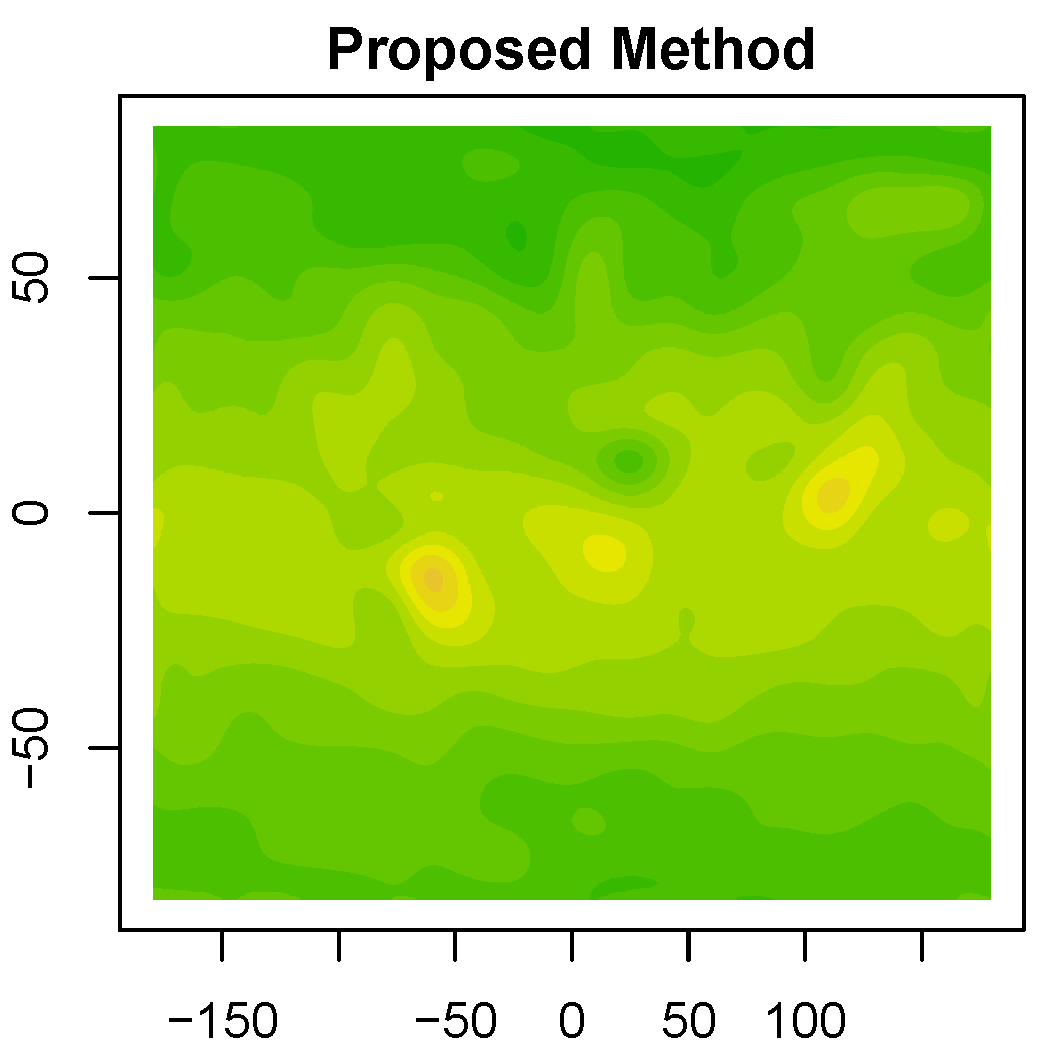}
\includegraphics[width = 1.2in]{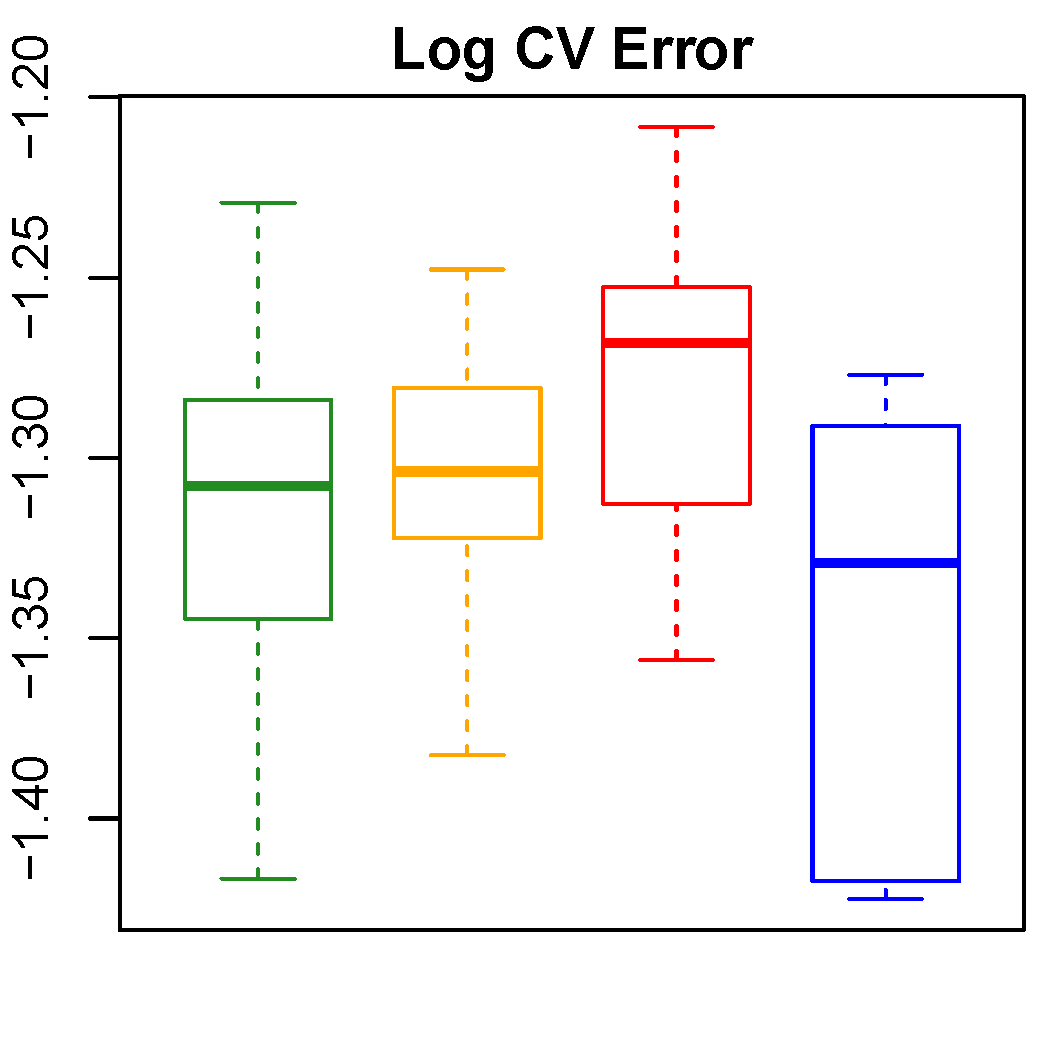}
		\caption{Analysis of CO2 concentration dataset.}
\end{subfigure}
	\caption{Pointwise posterior means of the functions and the logarithm of the 20-fold CV errors for prediction.}
	\label{fig:realanalysis}
\end{figure}

\section{Discussion}
\label{sec:Discussion}

In this paper, we introduce a novel approach to function estimation by employing a penalty-induced prior for basis exploration. While our primary focus is on multivariate nonparametric regression using tensor products of B-splines, the framework readily extends to more structured setups such as additive models and varying coefficient models. As an alternative generalization, one might consider different basis functions in place of B-splines. For instance, natural cubic splines can mitigate the erratic behavior often observed at boundaries with B-spline bases. Our method is inherently flexible and can be easily adapted to incorporate a variety of other basis functions.

\section*{Acknowledgment}
	This research was supported by the National Research Foundation of Korea (NRF) grant funded by the Korean government (MSIT) (2022R1C1C1006735, RS-2023-00217705). 

\appendix
\section{Appendices} \label{appendices}
\addcontentsline{toc}{section}{Appendices}
\renewcommand{\thesubsection}{\Alph{subsection}}

This appendix provides the mathematical proofs for the technical results and details of the MCMC algorithm.
We first establish additional notation.
  For a vector $v=(v_1,\dots,v_m)^T \in \mathbb{R}^{m}$, 
$\lVert v \rVert_2=(\sum_{i=1}^m |v_i|^2)^{1/2}$ denotes its $\ell_2$-norm and $\lVert v \rVert_{\max} = \max_{1\le i\le m} |v_i|$ denotes its maximum norm.
For a function $g:[0,1]^p\rightarrow \mathbb R$, $\lVert g \rVert_\infty = \sup_{u\in[0,1]^p} |f(u)|$ denotes the supremum norm. The maximum eigenvalue of a square matrix is denoted by $\rho_{\max}(\cdot)$. For a positive definite matrix, we often write $\rho_{\max}(\cdot)= \lVert \cdot \rVert_{\text{sp}}$ using the spectral norm, which equals the largest singular value.
For a semi-metric space $(\mathcal E,\delta)$, we denote its $\epsilon$-covering number by $N(\epsilon,\mathcal E, \delta)$.

\subsection{Details of the MCMC Algorithm} \label{appendix_algorithm}

\subsubsection{Grid Sampling for $\tau$}
Grid sampling for $\tau$ is implemented efficiently using \eqref{gridsampling} because matrix operations and eigenvalue computations are required only once to evaluate the posterior density of $\tau$ with many grid points. It remains to verify the expression in \eqref{gridsampling}. Observe that
\begin{align*}
	\log \pi(\tilde\theta_\J \mid \J, \sigma^2, \tau, \lambda) 
	&=
	\frac{1}{2}
	\log \begin{vmatrix} (1-\tau) \tilde P_\J + \tau n^{-1} \tilde B_\J^T \tilde B_\J \end{vmatrix} \\
	&\quad- \frac{1}{2\lambda \sigma^2} \tilde\theta_\J^T \left((1-\tau) \tilde P_\J + \tau n^{-1} \tilde B_\J^T \tilde B_\J\right) \tilde\theta_\J+ C_1,
\end{align*}
where $C_1$ is a constant independent of $\tau$. The determinant term satisfies
\begin{align*}
	\begin{split}
		\log \begin{vmatrix}
			(1-\tau) \tilde P_\J + \tau n^{-1} \tilde B_\J^T \tilde B_\J \end{vmatrix} 
		&= \log \begin{vmatrix}
			I_{J-1} + (\tau n^{-1} \tilde B_\J^T \tilde B_\J)^{-1} 
			(1-\tau) \tilde P_\J
		\end{vmatrix} 
		+
		\log  \begin{vmatrix}
			\tau n^{-1} \tilde B_\J^T \tilde B_\J
		\end{vmatrix}  \\ 
		&= \sum_{k=1}^{J-1} \log \!\left(1 + \frac{n(1-\tau)}{\tau} \rho_k \!\left( (\tilde B_\J^T \tilde B_\J)^{-1} \tilde P_\J \right) \right) + 
		(J-1) \log \tau + C_2,
	\end{split}
\end{align*}
where $C_2$ is a  constant independent of $\tau$. This verifies the expression in \eqref{gridsampling}.

\subsubsection{Slice Sampling for $\lambda$}

The target density is
\begin{align}
\pi(\lambda  \mid  \J, \sigma^2, \theta_0, \tilde\theta_\J, \tau, y) \propto \pi(\lambda) \pi(\tilde\theta_\J  \mid  \J, \sigma^2, \tau, \lambda)
\label{eqn:lambdapost}
\end{align}
The Gaussian density $\pi(\tilde\theta_\J  \mid  \J, \sigma^2, \tau, \lambda)$ is, as a function of $\lambda$, proportional to the inverse gamma density with parameters ${(J-3)}/{2}$ and $\tilde\theta_\J^T[(1-\tau) \tilde P_\J + \tau n^{-1}\tilde B_\J^T \tilde B_\J] \tilde\theta_\J/(2 \sigma^2)$ (recall that $J\ge4$).
For notational simplicity, let $\tilde\pi$ denote the full conditional posterior density of $\lambda$ in \eqref{eqn:lambdapost} and let $g$ denote the inverse gamma density as above.
Observe that
\begin{align*}
	\tilde\pi(\lambda)  &\propto  h(\lambda;c_\lambda) g(\lambda) =  g(\lambda)\int_{0}^{ h(\lambda;c_\lambda)}  d\gamma .
\end{align*}
The joint posterior density of $\lambda$ and the auxiliary variable $\gamma$ is given by
\begin{align*}
	\tilde\pi(\lambda,\gamma) \propto g(\lambda) I(0 < \gamma <  h(\lambda;c_\lambda)).
\end{align*}
Since an exponential density is monotone, the conditional density of each variable is
\begin{align*}
	\tilde\pi(\lambda \mid \gamma) &\propto g(\lambda)  \mathbbm 1\{0 < \lambda < h^{-1}(\gamma;c_\lambda)\}, \\
	\tilde\pi(\gamma \mid \lambda) &\propto  \mathbbm 1\{0 < \gamma <  h(\lambda;c_\lambda)\}.
\end{align*}

\subsection{Proof of the Main Result}
\label{sec:thmproof}


To prove Theorem~\ref{thm:rate}, we adopt the testing-based approach to establish posterior contraction rates \citep{ghosal2000convergence,ghosal2007convergence}. This method requires constructing a test function over a carefully chosen sieve. Because the true variance parameter $\sigma_0$ is unknown, deriving such a test is more complicated than in the standard literature. Here, we present the result from \citet{jeong2025l2norm} without proof, which establishes the existence of a test function for Gaussian models with unknown variance.
Let $\mathbb E_{f,\sigma}$ denote the expectation operator with $f:[0,1]^p\rightarrow \mathbb R$ and $\sigma^2>0$. We also define the semi-metric $d$ as
$d^2((f_1,\sigma_1),(f_2,\sigma_2))=\lVert f_1-f_2 \rVert_n^2 + |\sigma_1-\sigma_2|^2$ for any functions $f_1,f_2$ and $\sigma_1,\sigma_2\in(0,\infty)$.

\begin{lemma}[Theorem~2 of \citet{jeong2025l2norm}]
	\label{lmm:globaltest}
	Consider a positive sequence $\delta_n\rightarrow 0$, and let $\Theta_n$ be a subset of the parameter space for $(f,\sigma)$ satisfying $\log N(\delta_n,\Theta_n,d)\lesssim n\delta_n^2$.
Then, for any sufficiently large constant $M>0$, there exists a test $\varphi_n$ such that
\begin{align*}
	\mathbb E_0 \varphi_n \rightarrow 0,\quad 	\sup_{(f,\sigma)\in\Theta_n:d((f,\sigma),(f_0,\sigma_0))\ge M\delta_n}\mathbb E_{f,\sigma} (1-\varphi_n) \le e^{-Mn\delta_n^2}.
\end{align*}
\end{lemma}


Furthermore, as discussed in Section~\ref{sec:theory}, the marginal $t$-prior for the coefficients exhibits certain challenges. The following lemma restricts our attention to a bounded set of $\sigma^2$ with a suitably increasing order. The proof is provided in Appendix~\ref{appendix_proof}.

\begin{lemma}
\label{lmm:sigma}
Under the conditions for Theorem~\ref{thm:rate}, the marginal posterior of $\sigma^2$ satisfies $\mathbb E_0\Pi\{\sigma>D_n\mid y\}\rightarrow 0$ for any $D_n\rightarrow \infty$.
\end{lemma}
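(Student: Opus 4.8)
\textbf{Proof proposal for Lemma~\ref{lmm:sigma}.}

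The plan is to show that for any $D_n \to \infty$, the posterior mass assigned to $\{\sigma > D_n\}$ vanishes in expectation. I would work directly with the marginal posterior of $\sigma^2$, integrating out the coefficients $(\tilde\theta_1, \tilde{\boldsymbol\theta}_J)$ and the dispersion parameter $\lambda$, but keeping $J$ and $\tau$ in the conditioning and summing/integrating at the end. Conditional on $(J,\lambda,\tau)$, the model is a conjugate Gaussian linear model with an inverse gamma prior on $\sigma^2$, so the marginal posterior of $\sigma^2$ given $(J,\lambda,\tau)$ is again inverse gamma with shape $a_\sigma + n/2$ and scale
$$
S_{J,\lambda,\tau} = b_\sigma + \tfrac12\Big[\mathbf y^T\mathbf y - \tfrac{(\mathbf y^T\mathbf 1_n)^2}{n+\kappa^2} - \mathbf y^T\tilde{\mathbf B}_J \mathbf\Omega_{J,\lambda,\tau}^{-1}\tilde{\mathbf B}_J^T\mathbf y\Big],
$$
exactly the scale appearing in step (ii) of the sampler. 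The key structural fact is that $S_{J,\lambda,\tau} \le b_\sigma + \tfrac12 \mathbf y^T\mathbf y$ uniformly in $(J,\lambda,\tau)$, because the subtracted quadratic form is nonnegative (it is $\mathbf y^T\tilde{\mathbf B}_J\mathbf\Omega^{-1}\tilde{\mathbf B}_J^T\mathbf y \ge 0$ together with the nonnegativity of the mean-centering term). Hence the posterior of $\sigma^2$ given $(J,\lambda,\tau)$ is stochastically dominated by $\mathrm{IG}(a_\sigma + n/2,\, b_\sigma + \tfrac12\mathbf y^T\mathbf y)$, and this bound does not depend on $(J,\lambda,\tau)$, so it passes to the full marginal posterior after averaging.

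From there the argument is a routine tail bound. Using the stochastic domination, $\Pi(\sigma^2 > D_n^2 \mid \mathbf y) \le \Pi_{\mathrm{IG}}(\sigma^2 > D_n^2)$ where the right side is the tail of $\mathrm{IG}(a_\sigma+n/2, b_\sigma + \tfrac12\mathbf y^T\mathbf y)$; equivalently, writing $W \sim \mathrm{Gamma}(a_\sigma + n/2, 1)$, this equals $\Pr(W < (b_\sigma + \tfrac12\mathbf y^T\mathbf y)/D_n^2)$. Then I take expectation under $\mathbb E_0$. Under the truth, $\mathbf y^T\mathbf y = \lVert \boldsymbol\mu_0\rVert_2^2 + 2\boldsymbol\mu_0^T\boldsymbol\epsilon + \lVert\boldsymbol\epsilon\rVert_2^2$ with $\boldsymbol\mu_0 = (f_0(x_1),\dots,f_0(x_n))^T$; since $f_0$ is bounded (Assumption~\ref{asm:holder} gives $\lVert f_0\rVert_\infty < \infty$) and $\sigma_0^2$ is bounded (Assumption~\ref{asm:sigma}), we have $\mathbb E_0(\mathbf y^T\mathbf y) \lesssim n$, and moreover $\mathbf y^T\mathbf y \ge n c^{-1}/2$ with probability tending to one (e.g.\ via a concentration bound for $\lVert\boldsymbol\epsilon\rVert_2^2$). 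On the event $\{c^{-1} n /2 \le \mathbf y^T\mathbf y \le C n\}$, which has probability $\to 1$, the threshold $(b_\sigma + \tfrac12\mathbf y^T\mathbf y)/D_n^2$ is of order $n/D_n^2 = o(n)$, while $W$ concentrates around its mean $a_\sigma + n/2 \asymp n$; so $\Pr\big(W < (b_\sigma + \tfrac12\mathbf y^T\mathbf y)/D_n^2 \,\big|\, \mathbf y\big)$ is exponentially small in $n$ by a standard lower-tail Chernoff bound for the Gamma distribution. Combining the small-probability complement event with this exponentially small conditional probability gives $\mathbb E_0\Pi(\sigma > D_n \mid \mathbf y) \to 0$.

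The main obstacle is not conceptual but bookkeeping: one must verify the inequality $S_{J,\lambda,\tau} \le b_\sigma + \tfrac12 \mathbf y^T\mathbf y$ cleanly and confirm that the stochastic-domination step genuinely survives averaging over the joint posterior of $(J,\lambda,\tau)$ (it does, because the dominating $\mathrm{IG}$ law is free of those parameters, so $\Pi(\sigma^2 > t \mid \mathbf y) = \sum_J \int\!\int \Pi(\sigma^2 > t\mid J,\lambda,\tau,\mathbf y)\, d\Pi(\lambda,\tau\mid J,\mathbf y)\pi(J\mid\mathbf y) \le \Pi_{\mathrm{IG}}(\sigma^2 > t)$). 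A secondary point to be careful about is that $a_\sigma$ may be small (the paper allows $a_\sigma \ge 0$ in general but requires $a_\sigma > 0$ here); since the dominating Gamma shape is $a_\sigma + n/2$, the lower tail is governed by the $n/2$ term and the small additive $a_\sigma$ is harmless. No use of the testing lemma or the entropy machinery is needed; this lemma is purely a conjugacy-plus-concentration statement that lets the main proof restrict attention to $\sigma \le D_n$.
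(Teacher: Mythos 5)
Your proposal is correct, and it reaches the conclusion by a genuinely different route than the paper in the tail-control step, although both arguments share the same starting point: the conjugate inverse-gamma conditional posterior $\sigma^2\mid J,\lambda,\tau,\mathbf y\sim \mathrm{IG}(a_\sigma+n/2,\,S_{J,\lambda,\tau})$ and the need for a bound uniform over $(J,\lambda,\tau)$. The paper proceeds via Markov's inequality, $\mathbb E_0\Pi(\sigma>D_n\mid\mathbf y)\le D_n^{-2}\mathbb E_0\mathbb E(\sigma^2\mid\mathbf y)$, and then shows that $\mathbb E_0\,\mathbb E(\sigma^2\mid J,\lambda,\tau,\mathbf y)\asymp \mathbb E_0\hat\sigma^2_{J,\lambda,\tau}$ stays bounded uniformly, which requires the matrix computations $\mathrm{tr}(\mathbf I_n-\mathbf U_{J,\lambda,\tau})\le J$ and $\lVert\mathbf U_{J,\lambda,\tau}\rVert_{\mathrm{sp}}\le 1$; this is a purely moment-based argument needing no tail bounds under $P_0$. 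You instead use the cruder but uniform scale bound $S_{J,\lambda,\tau}\le b_\sigma+\tfrac12\mathbf y^T\mathbf y$ (only nonnegativity of the two subtracted quadratic forms, with positive definiteness of $\mathbf\Omega_{J,\lambda,\tau}$), so that a single dominating $\mathrm{IG}(a_\sigma+n/2,\,b_\sigma+\tfrac12\mathbf y^T\mathbf y)$ law, free of $(J,\lambda,\tau)$, survives averaging over the hyperparameter posterior; you then pay for the crudeness with a Gamma lower-tail Chernoff bound together with the high-probability event $\mathbf y^T\mathbf y\lesssim n$ under $P_0$ (which follows from \ref{asm:holder}--\ref{asm:sigma} and $\chi^2$ concentration). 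Each approach buys something: the paper's avoids any concentration argument and needs only first moments, at the cost of the trace/spectral bookkeeping, while yours dispenses entirely with the trace computation and in fact yields an exponentially small posterior mass on the good event — stronger than the $D_n^{-2}$ rate Markov gives — at the cost of the extra probabilistic event and the Gamma tail bound. Both arguments are valid for any $D_n\rightarrow\infty$, and your handling of the two delicate points (domination passing through the mixture over $(J,\lambda,\tau)$, and the harmlessness of a small $a_\sigma>0$ in the shape $a_\sigma+n/2$) is sound.
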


We further establish eigenvalue bounds for the prior covariance matrix corresponding to the coefficients of the original B-spline basis. Although a prior distribution is assigned to the transformed B-spline basis, our theoretical analysis is based on the original construction for convenience. This approach ensures that the prior allocates sufficient probability mass around the true values while maintaining exponentially small tails. The proof is provided in Appendix~\ref{appendix_proof}.

\begin{lemma}
	 \label{lmm:eigen}
	 Let $\Sigma_{\J,\sigma,\tau,\lambda}$ be the prior variance of the B-spline coefficients $\theta_\J$ induced from \eqref{eqn:Covariance_proposedmethod}. We obtain 
	\begin{align*}
	\rho_{\min}(\Sigma_{\J,\sigma,\tau,\lambda}^{-1})&\ge \sigma^{-2}\min(\kappa^{-2}, \tau/\lambda) \rho_{\min}(n^{-1} B_\J^T B_\J), \\
	\rho_{\max}(\Sigma_{\J,\sigma,\tau,\lambda}^{-1})&\le 16p(\lambda \sigma^2)^{-1}(1-\tau)+ \sigma^{-2}\max(\kappa^{-2}, \tau/\lambda) .
\end{align*}
\end{lemma}

Now, we present the proof of Theorem~\ref{thm:rate}. 

\begin{proof}[Proof of Theorem~\ref{thm:rate}.]
Observe that for any $M>0$,
\begin{align}
	\begin{split}
	&\Pi\{\lVert f-f_0 \rVert_n+|\sigma-\sigma_0|>M\epsilon_n\mid y\}\\
&\quad\le \Pi\{\sigma>n\mid y\}+\Pi\{\lVert f-f_0 \rVert_n+|\sigma-\sigma_0|>M\epsilon_n\mid \sigma< n,  y\}.
	\end{split}
	\label{eqn:post1}
\end{align}
The expected value of the first term on the right-hand side goes to zero by Lemma~\ref{lmm:sigma}. Let $\widetilde\Pi(\cdot) = \Pi(\cdot \mid \sigma<n)$ be the prior that is renormalized and restricted to $\{\sigma^2:\sigma>n\}$. Denoting the induced posterior by $\widetilde \Pi(\cdot \mid y)$, the second term in \eqref{eqn:post1} can be expressed as
$$
\widetilde\Pi\{\lVert f-f_0 \rVert_n+|\sigma-\sigma_0|>M\epsilon_n\mid y\}.
$$
We will verify that the expected value of this expression goes to zero. 
	For the Kullback-Leibler divergence $K(p_1,p_2)=\int \log(p_1/p_2)p_1$ and its second order variation $V(p_1,p_2)=\int |\log(p_1/p_2)-K(p_1,p_2)|^2 p_1$, define 
\begin{align*}
	\mathcal A_n=\left\{ (f,\sigma): \sum_{i=1}^n K(p_{0,i},p_{f,\sigma, i})\le n\epsilon_n^2 , \, \sum_{i=1}^n V(p_{0,i},p_{f,\sigma,i})\le n\epsilon_n^2\right\},
\end{align*}
where $p_{0,i}$ and $p_{f,\sigma, i}$ denote the Gaussian densities of the individual $y_i$ using $(f_0,\sigma_0)$ and $(f,\sigma)$, respectively.
We define $\mathcal F_\J= \left\{f = \theta_\J^T\psi_\J:  \theta_\J \in\mathbb R^J\right\}$ and $\mathcal F=\bigcup_{\J:J_m\ge l_m,m=1,\dots,p} \mathcal F_\J$, where $\psi_\J:[0,1]^p\rightarrow \mathbb R^J$ denotes the tensor product basis functions.
In other words, $\mathcal F_\J$ is the $p$-dimensional spline space with $\J$ and $\mathcal F$ is their collection for all $\J$.
Following the testing-based approach \citep{ghosal2000convergence,ghosal2007convergence}, we only need to show that there exists $\tilde{\mathcal F}_n\subset \mathcal F$ such that for some constant $c>0$,
\begin{align}
	\widetilde \Pi(\mathcal A_n)&\ge e^{-cn\epsilon_n^2},\label{eqn:cond1}\\
	\mathbb E_0 \varphi_n \rightarrow 0,\quad 	\sup_{(f,\sigma)\in\tilde{\mathcal F}_n\times (0,n): d((f,\sigma),(f_0,\sigma_0))\ge M\epsilon_n}\mathbb E_{f,\sigma} (1-\varphi_n) &\le e^{-(c+4)n\epsilon_n^2}. \label{eqn:cond2}\\
	\widetilde\Pi((f,\sigma)\notin\tilde{\mathcal F}_n\times (0,n))&\le e^{-(c+4)n\epsilon_n^2}.
	\label{eqn:cond3}
\end{align}
 (This will verify that the posterior contraction for $(f,\sigma)$ is $\epsilon_n$ with respect to $d$, but $\sqrt{\lVert\cdot\rVert_n^2+|\cdot|^2}$ and $\lVert\cdot\rVert_n+|\cdot|$ have the same order.)
Below we will show that the conditions in \eqref{eqn:cond1}--\eqref{eqn:cond3} are satisfied with $\epsilon_n \asymp ((\log n)/n)^{\bar\alpha/(2\bar\alpha+p)}$.

	We first verify \eqref{eqn:cond1}.
By direct calculations, 
\begin{align*}
	\frac{1}{n}\sum_{i=1}^n K(p_{0,i},p_{f,\sigma,i})&=\frac{1}{2} \log \left( \frac{\sigma^2}{\sigma_0^2} \right)-\frac{1}{2}\left(1-\frac{\sigma_0^2}{\sigma^2}\right)+\frac{\lVert f-f_0 \rVert_n^2}{2\sigma^2},\\
	\frac{1}{n}\sum_{i=1}^n V(p_{0,i},p_{f,\sigma,i}) &=\frac{1}{2}\left(1-\frac{\sigma_0^2}{\sigma^2}\right)^2+\frac{\sigma_0^2\lVert f-f_0 \rVert_n^2}{\sigma^4}.
\end{align*}
Using the Taylor expansion, it is easy to see that, for any $\epsilon_n\rightarrow 0$,
there exists a constant $C_1>0$ such that
\begin{align*}
	\mathcal A_n&\supset \{(f,\sigma): \lVert f-f_0\rVert_n\le C_1 \epsilon_n , | \sigma^2-\sigma_0^2|\le C_1 \epsilon_n \}.
\end{align*}
Therefore,
\begin{align*}
	\widetilde \Pi(\mathcal A_n)&\ge\widetilde \Pi\{(f,\sigma): \lVert f-f_0\rVert_n\le C_1\epsilon_n , |\sigma^2-\sigma_0^2|\le C_1\epsilon_n\}\\
	&= \int_{\{\sigma^2:|\sigma^2-\sigma_0^2|\le C_1\epsilon_n\}}\Pi\{f\in\mathcal F:\lVert f-f_0\rVert_n\le C_1\epsilon_n\mid \sigma\}d\widetilde\Pi(\sigma^2)\\
	&\ge \widetilde\Pi\{|\sigma^2-\sigma_0^2|\le C_1\epsilon_n\}\inf_{\sigma^2:|\sigma^2-\sigma_0^2|\le C_1\epsilon_n} \Pi\{f\in\mathcal F:\lVert f-f_0\rVert_n\le C_1\epsilon_n\mid \sigma^2\}.
\end{align*}
Since the density of a proper inverse gamma distribution is bounded away from zero on a compact subset of $(0,\infty)$, we obtain that
\begin{align*}
\log  \widetilde\Pi\{|\sigma^2-\sigma_0^2|\le C_1\epsilon_n\}\ge  \log \Pi\{|\sigma^2-\sigma_0^2|\le C_1\epsilon_n\}\gtrsim\log \epsilon_n\gtrsim -\log n.
\end{align*}
By restricting $J_m$ to $\hat J_m\asymp (n/\log n)^{\bar\alpha/(\alpha_m(2\bar\alpha+p))}$, we obtain
\begin{align*}
   &\inf_{\sigma^2:|\sigma^2-\sigma_0^2|\le C_1\epsilon_n} \Pi\{f\in\mathcal F:\lVert f-f_0\rVert_n\le C_1\epsilon_n\mid \sigma^2\}\\
    &\quad \ge \Pi\{\J=\hat \J\}\inf_{\sigma^2:|\sigma^2-\sigma_0^2|\le C_1\epsilon_n} \Pi\{f\in\mathcal F_{\J}:\lVert f-f_0\rVert_\infty\le C_1\epsilon_n\mid \J=\hat \J, \sigma^2\},
\end{align*}
where $\hat\J = \{\hat J_m\}_{m=1}^p$. Using Assumption~\ref{asm:prior}, we obtain that $\log\Pi\{\J=\hat \J\}\gtrsim - \hat J\log \hat J \gtrsim -n\epsilon_n^2$ since $\hat J\log n\asymp n\epsilon_n^2$, where $\hat J = \prod_{m=1}^p \hat J_m \asymp (n/\log n)^{p/(2\bar\alpha+p)}$. Moreover, the classical B-spline theory \citep{schumaker2007spline} shows that for any $f_0\in\mathcal H^\alpha([0,1])$, there exists $\hat f = \hat\theta_\J^T \psi_\J\in\mathcal F_\J$ with $\lVert\hat\theta_\J\rVert_{\max}\lesssim 1$ such that
$$
 \lVert \hat f-f_0\rVert_\infty \lesssim \sum_{m=1}^p J_m^{-\alpha_m}.
$$
The right-hand side is order of $\epsilon_n$ if $\J=\hat\J$. 
Since the inequality $\lVert \theta_\J^T \psi_\J\rVert_\infty\le \lVert \theta_\J\rVert_{\max}$ holds by the sum-to-unity property of B-splines, there exists $C_2>0$ and $C_3>0$ such that
\begin{align*}
	&\Pi\{f\in\mathcal F_{\J}:\lVert f-f_0\rVert_\infty\le C_1\epsilon_n\mid \J=\hat \J,\sigma^2\} \\
	&\quad\ge \Pi\{\theta_\J\in\mathbb R^{J}: \lVert \theta_\J - \hat \theta_\J  \rVert_{\max}\le C_2\epsilon_n \mid \J=\hat \J,\sigma^2 \}\\
	&\quad\ge \Pi\!\left\{\theta_\J\in\mathbb R^{J}: \lVert \Sigma_{\J,\sigma,\tau,\lambda}^{-1/2}( \theta_\J - \hat {\theta}_\J ) \rVert_2\le \frac{C_2\epsilon_n}{\lVert \Sigma_{\J,\sigma,\tau,\lambda}\rVert_{\text{sp}}^{1/2}}\mid  \J=\hat \J,\sigma^2\right \}\\
	&\quad\gtrsim\frac{1}{n^{C_3}}\inf_{\lambda\in(\lambda_0,n^k)} \inf_{\tau\in(n^{-k},1)}\Pi\!\left\{\theta_\J\in\mathbb R^{J}: \lVert \Sigma_{\J,\sigma,\tau,\lambda}^{-1/2}(\theta_\J - \hat {\theta}_\J ) \rVert_2\le \frac{C_2\epsilon_n}{\lVert \Sigma_{\J,\sigma,\tau,\lambda}\rVert_{\text{sp}}^{1/2}}\mid \J=\hat \J, \sigma^2, \tau,\lambda \right\},
\end{align*}
where the last inequality holds because $\log\Pi\{\lambda_0 < \lambda < n^{k}\} \gtrsim -\log n$ and 
$\Pi\{\tau > n^{-k}\}\gtrsim 1$ follow from Assumption~\ref{asm:prior}.
Since $ \Sigma_{\J,\sigma,\tau,\lambda}^{-1/2}\theta_\J $ has the Gaussian prior with zero mean and identity covariance matrix, using the change of variables, we obtain that for any $v>0$,
\begin{align*}
	&\Pi\!\left\{\theta_\J\in\mathbb R^{J}: \lVert \Sigma_{\J,\sigma,\tau,\lambda}^{-1/2}( \theta_\J - \hat \theta_\J ) \rVert_2\le v\mid \J=\hat \J,\sigma^2, \tau,\lambda\right \} \\
	&\quad\ge
	2^{-\hat J/2}\exp\!\left(-\big\lVert \Sigma_{\hat \J,\sigma,\tau,\lambda}^{-1/2}\hat \theta_{\hat \J} \big\rVert_2^2\right)\Pi\!\left\{\theta_\J\in\mathbb R^{J}: \lVert \Sigma_{\J,\sigma,\tau,\lambda}^{-1/2}\theta_\J \rVert_2\le v/\sqrt{2}\mid \J=\hat \J,\sigma^2, \tau,\lambda \right\}.
\end{align*}
(For example, see the calculation on page 216 of \citet{ghosal2007convergence}.) By Lemma~\ref{lmm:eigen}, observe that
$$
\big\lVert \Sigma_{\hat \J,\sigma,\tau,\lambda}^{-1/2}\hat \theta_{\hat \J} \big\rVert_2^2 \le \big\lVert \Sigma_{\hat \J,\sigma,\tau,\lambda}^{-1}\big\rVert_{\text{sp}}\big\lVert\hat \theta_{\hat \J} \big\rVert_2^2 \le \hat J\big\lVert \Sigma_{\hat \J,\sigma,\tau,\lambda}^{-1}\big\rVert_{\text{sp}}\big\lVert\hat \theta_{\hat \J} \big\rVert_{\max}^2 \lesssim \hat J/\sigma^2,
$$
uniformly over $\lambda\in(\lambda_0,n^k)$ and $\tau\in(n^{-k},1)$.
Using the fact that $\lVert \Sigma_{\J,\sigma,\tau,\lambda}^{-1/2}\theta_\J \rVert_2^2$ has a $\chi^2$-distribution with $J$ degrees of freedom, we obtain
\begin{align}
	\begin{split}
	\Pi\!\left\{\theta_\J\in\mathbb R^{J}: \lVert \Sigma_{\J,\sigma,\tau,\lambda}^{-1/2} \theta_\J \rVert_2\le v/\sqrt{2}\mid \J=\hat \J,\sigma^2,\tau,\lambda \right \}&\ge \frac{2/\hat J }{2^{\hat J }\Gamma(\hat J /2)}v^{\hat J }e^{-v^2/4}\\
&\ge e^{-C_4(\hat J \log \hat J +\hat J \log (1/v)+v^2)},
	\end{split}
	\label{eqn:lowbo1}
\end{align}
for some $C_4>0$. Observe that by Lemma~\ref{lmm:eigen},
\begin{align}
	\begin{split}
	\lVert\Sigma_{\hat{\J},\sigma,\tau,\lambda}\rVert_{\text{sp}}&\ge 1/\rho_{\max}\big(\Sigma_{\hat{\J},\sigma,\tau,\lambda}^{-1}\big)	\gtrsim \sigma^2/(1+1/\lambda)\\
\lVert\Sigma_{\hat{\J},\sigma,\tau,\lambda}\rVert_{\text{sp}}  &= 1/	\rho_{\min}\big(\Sigma_{\hat{\J},\sigma,\tau,\lambda}^{-1}\big)\lesssim  \sigma^2 (1+\lambda/\tau)/\rho_{\min}(n^{-1} B_{\hat{\J}}^T B_{\hat{\J}}).
	\end{split}
	\label{eqn:eigenbo}
\end{align}
Assumption~\ref{asm:eigen} further implies that $\log \rho_{\min}(n^{-1} B_{\hat{\J}}^T B_{\hat{\J}})\gtrsim -\log n$.
Therefore, by plugging in $v=C_2\epsilon_n/\lVert \Sigma_{\hat\J,\sigma,\tau,\lambda}\rVert_{\text{sp}}^{1/2}$ and applying the lower and upper bounds in \eqref{eqn:eigenbo},
one deduces that the logarithm of \eqref{eqn:lowbo1} is bounded below by a constant multiple of $ -\hat J\log \hat J-\hat J\log n-\hat J\log\sigma-\sigma^{-2}$, uniformly over $\lambda\in(\lambda_0,n^k)$ and $\tau\in(n^{-k},1)$.
Combining the bounds, it follows that
\begin{align*}
	&\log \inf_{\sigma^2:|\sigma^2-\sigma_0^2|\le C_1\epsilon_n} \Pi\{f\in\mathcal F_\J:\lVert f-f_0\rVert_\infty\le C_1\epsilon_n\mid \J=\hat \J,\sigma^2\}
	\gtrsim -\hat J \log n.
\end{align*}
Putting everything together, we obtain that $\widetilde \Pi(\mathcal A_n)\ge e^{- cn\epsilon_n^2}$ for some $c>0$.

 Next, to verify \eqref{eqn:cond2}, Lemma~\ref{lmm:globaltest} implies that it suffices to show that $\log N(\epsilon_n,\tilde{\mathcal F}_n\times(0,n), d)\lesssim n\epsilon_n^2$.
 Let 
 $$
\tilde{\mathcal F}_n=\bigcup_{\J: J\le \bar J}\{f\in\mathcal F_\J: \lVert \theta_\J\rVert_{\max}\le n^{D_1}\},
 $$
 for a constant $D_1>0$, where $\bar J = \lfloor D_2 \hat J \rfloor$ for a constant $D_2>0$. 
Since $\lVert \theta_\J^T \psi_\J \rVert_n\le \lVert\theta_\J\rVert_{\max}$ by the sum-to-unity property, we obtain that
 \begin{align*}
\log N(\epsilon_n,\tilde{\mathcal F}_n\times (0,n), d) &\le \log N\!\left(\frac{\epsilon_n}{2},\tilde{\mathcal  F}_n, \lVert\cdot\rVert_n\right) +\log N\!\left(\frac{\epsilon_n}{2},(0,n), |\cdot|\right) \\
&\le \log \!\left[ \sum_{\J: J\le \bar J} N\!\left( \frac{\epsilon_n}{2}, \{\theta_\J\in\mathbb R^J:\lVert \theta_\J\rVert_{\max}\le n^{D_1}\},\lVert\cdot \rVert_{\max}
\right)\right] + \log (2n/\epsilon_n)\\
&\le \log\#\{\J: J\le \bar J\}+\bar J\log( 4 n^{D_1} /\epsilon_n)+\log (2n/\epsilon_n).
 \end{align*}
Since $\#\{\J: J\le \bar J\}\le \#\{\J: J_m\le \bar J,m=1,\dots,p\} \le\bar J^{p}$, the rightmost side is bounded by a multiple of $\bar J \log n\lesssim n\epsilon_n^2$.

Lastly, to verify \eqref{eqn:cond3}, observe that
\begin{align*}
\widetilde\Pi\{(f,\sigma)\notin\tilde{\mathcal  F}_n\times (0,n)\}&\le \Pi\{\J: J> \bar J\}+\Pi\{\tau < n^{-k}\}+\Pi\{\lambda>n^k\}\\
&\quad+\widetilde\Pi\{\lVert\theta_\J \rVert_{\max} > n^{D_1},  J\le \bar J,\tau \ge n^{-k}, \lambda\le n^k\}.
\end{align*}
Assumption~\ref{asm:prior} implies that $\log\Pi\{\tau < n^{-k}\}\lesssim -n$ and $\log\Pi\{\lambda > n^k\}\lesssim -n$. 
Furthermore, for any $j\gtrsim (n/\log n)^{p/(2\bar\alpha+p)} $, Assumption~\ref{asm:prior} also shows that 
$$
\log\Pi\{\J:J = j\} =\log\sum_{\J:J = j}\pi(\J)\le \log\#\{\J:J = j\}- C_5 j\log j \le p\log j  - C_5 j\log j\lesssim  -j\log j,
$$
for some constant $C_5>0$.
By applying the sum of a geometric series, we obtain
$\log\Pi\{\J:J > j\}\lesssim -j\log j$ for such $j$. Therefore, if $D_2$ is chosen to be sufficiently large, it follows that $\Pi\{\J: J> \bar J\}+\Pi\{\tau < n^{-k}\}+\Pi\{\lambda>n^k\}=o(e^{-(c+2)n\epsilon_n^2})$.
Moreover,
\begin{align}
	\begin{split}
	&\widetilde\Pi\{\lVert\theta_\J \rVert_{\max} > n^{D_1},J\le \bar J,\tau\ge n^{-k},\lambda\le n^k\}\\
&\quad=\sum_{\J: J\le \bar J}\pi(\J)\int_0^{n^k} \int_{n^{-k}}^1\int_0^{n}\Pi\{\lVert \theta_\J\rVert_{\max} > n^{D_1} \mid \J,\sigma^2,\tau,\lambda\}d\widetilde\Pi(\sigma^2)d\Pi(\tau)d\Pi(\lambda) \\
&\quad\le \max_{\J: J\le \bar J}\sup_{\sigma\in(0, n)}\sup_{\tau\in(n^{-k},1)}\sup_{\lambda\in(0, n^k)}  \Pi\{\lVert \theta_\J\rVert_{\max} > n^{D_1}  \mid \J,\sigma^2,\tau,\lambda\}\\
&\quad\le \max_{\J: J\le \bar J}\sup_{\sigma\in(0, n)}\sup_{\tau\in(n^{-k},1)}\sup_{\lambda\in(0, n^k)}   \Pi\!\left\{\lVert \Sigma_{\J,\sigma,\tau,\lambda}^{-1/2}\theta_\J\rVert_{\max} > n^{D_1} / \sqrt{J\lVert \Sigma_{\J,\sigma,\tau,\lambda}\rVert_{\text{sp}}} \mid \J,\sigma^2,\tau,\lambda\right\}\\
&\quad\le \sup_{\sigma\in(0, n)}\sup_{\tau\in(n^{-k},1)}\sup_{\lambda\in(0, n^k)}  2\bar 
J\exp\!\left(-\frac{n^{2D_1} }{ 2\bar J\max_{\J: J\le \bar J}\lVert\Sigma_{ \J,\sigma,\tau,\lambda}\rVert_{\text{sp}}}\right).
	\end{split}
\label{eqn:sievepr}
\end{align}
Using Lemma~\ref{lmm:eigen} and Assumption~\ref{asm:eigen}, we obtain
$$
\max_{\J: J\le \bar J}\log \lVert \Sigma_{\J,\sigma,\tau,\lambda}\rVert_{\text{sp}} = 1/\min_{\J: J\le \bar J}\log\rho_{\min}(\Sigma_{\J,\sigma,\tau,\lambda}^{-1})\lesssim \log \sigma + \log \max\{\kappa^2,\lambda/\tau\} +\log n.
$$
Therefore, with a suitably chosen $D_1$, the rightmost side of \eqref{eqn:sievepr} is bounded by $e^{-n}$, which verifies \eqref{eqn:cond3}.
\end{proof}

\subsection{Proofs of the Remaining Technical Results} \label{appendix_proof}

\begin{proof}[Proof of Lemma~\ref{lmm:pspline}]
	We first find a nonsingular matrix $A_\J\in\mathbb R^{J\times J}$ such that $A_\J\theta_\J = (\theta_0,\tilde\theta_\J^T)^T$ with $B_\J\theta_\J = \theta_0 1_n+\tilde B_\J\tilde\theta_\J$. Let $B_\J$ be partitioned as $B_\J=[B_{\J,1}, B_{\J,-1}]$ with $B_{\J,1}\in\mathbb R^n$ and $B_{\J,-1}\in\mathbb R^{n\times (J-1)}$, and let $\theta_\J$ be partitioned as $\theta_\J=(\theta_{\J,1},\theta_{\J,-1}^T)^T$ with $\theta_{\J,1}\in\mathbb R$ and $\theta_{\J,-1}\in\mathbb R^{J-1}$.
	Using the projection matrix $H_1=n^{-1}1_n1_n^T$ and the sum-to-unity property $B_\J1_J = 1_n$ of B-splines, we obtain
	\begin{align*} 
	B_\J\theta_\J &= H_1 B_\J \theta_\J + (I_n-H_1) (\theta_{\J,1} B_{\J,1}+ B_{\J,-1} \theta_{\J,-1} ) \\
	&= H_1 B_\J \theta_\J + (I_n-H_1) (\theta_{\J,1} (1_n-B_{\J,-1}1_{J-1})+ B_{\J,-1} \theta_{\J,-1} )\\
	&= H_1 B_\J \theta_\J + \tilde B_\J(\theta_{\J,-1}-\theta_{\J,1}1_{J-1}),
	\end{align*}
where $\tilde B_\J=(I_n-H_1) B_{\J,-1} $. This gives the linear map
\begin{align}  
	A_\J = 
	\begin{pmatrix}
n^{-1} 1_n^T B_{\J,1} & n^{-1} 1_n^T B_{\J,-1}\\
-1_{J-1} & I_{J-1}
	\end{pmatrix} \in \mathbb{R}^{J \times J}.
	\label{eqn:linearmap}
\end{align} 
Define $S_{J_m}=I_{J_1}\otimes\dots\otimes I_{J_{m-1}}\otimes D_{J_m} \otimes I_{J_{m+1}}\otimes\dots\otimes I_{J_p}\in\mathbb R^{(J-2)\times J}$ so that $P_{J_m} = S_{J_m}^T S_{J_m}$. Observe that for any matrices $G_1\in\mathbb R^{p_1\times q_1}$ and $G_2\in\mathbb R^{p_2\times q_2}$, we have $(G_1\otimes G_2)1_{q_1 q_2} = \text{vec}(G_2 1_{q_2} 1_{q_1}^T G_1^T)$. This implies that $S_{J_m} 1_J=0$ because $D_{J_m} 1_{J_m} = 0$. Consequently, the vector $-\tilde S_{J_m}1_{J-1}$ equals the first column of $S_{J_m}$, where $\tilde S_{J_m}\in\mathbb R^{(J-2)\times (J-1)}$ is obtained by removing the first column of $S_{J_m}$. Thus, $[0_{J-2},\tilde S_{J_m}]A_\J = [-\tilde S_{J_m}1_{J-1},\tilde S_{J_m}]=S_{J_m}$. This leads to $\theta_\J^T  P_{J_m} \theta_\J = (S_{J_m} \theta_\J)^T S_{J_m} \theta_\J = (\tilde S_{J_m} \tilde \theta_\J)^T \tilde  S_{J_m} \tilde \theta_\J = \tilde \theta_\J^T  \tilde P_{J_m} \tilde \theta_\J $. Therefore, by rewriting \eqref{eqn:psplineprior} with $\text{rank}(P_\J)=\text{rank}(\tilde{P}_\J)$, we obtain the equivalent expression in \eqref{eqn:psplineprior_modified}.
\end{proof}

\begin{proof}[Proof of Lemma~\ref{lmm:sigma}]
	By Markov's inequality, 
	\begin{align*}
		\mathbb E_0\Pi\{\sigma>D_n\mid y\}\le D_n^{-2} \mathbb E_0\mathbb E(\sigma^2\mid y).
	\end{align*}
	By Fubini's theorem, 
	\begin{align*}
		\mathbb E_0\mathbb E(\sigma^2 \mid y) 
		&=\int \int  \mathbb E(\sigma^2\mid \J,\tau,\lambda, y) d \Pi(\J,\tau,\lambda\mid y) dP_0\\
		&=\int \mathbb E_0  \mathbb E(\sigma^2\mid \J,\tau,\lambda, y)  d\Pi(\J,\tau,\lambda\mid y) \\
		&\le \max_{\J:J\le n}\sup_{\tau\in(0,1)}\sup_{\lambda>0}\mathbb E_0  \mathbb E(\sigma^2\mid \J,\tau,\lambda,y)  .
	\end{align*}
	If $a_\sigma+n/2>1$, we obtain that 
	$$
	\mathbb E(\sigma^2\mid \J,\tau,\lambda,y)=\frac{2b_\sigma+n\hat\sigma_{\J,\tau,\lambda}^2}{2a_\sigma+n-2}\asymp\hat\sigma_{\J,\tau,\lambda}^2,
	$$
	where 
	$$\hat\sigma_{\J,\tau,\lambda}^2=\frac{1}{n}\left[y^T y - \frac{(y^T 1_n)^2}{n + \kappa^2} -
	y^T \tilde B_\J
	\Omega_{\J,\tau,\lambda}^{-1}
	\tilde B_\J^T y\right] =\frac{1}{n}y^T\left[ I_n - \frac{1_n1_n^T}{n + \kappa^2} -
	\tilde B_\J
	\Omega_{\J,\tau,\lambda}^{-1}
	\tilde B_\J^T \right]y. $$
	Let $ U_{\J,\tau,\lambda}= I_n - (1_n1_n^T)/(n + \kappa^2) -
	\tilde B_\J
	\Omega_{\J,\tau,\lambda}^{-1}
	\tilde B_\J^T $. Since $\mathbb E_0(y^T A y) = \sigma_0^2\text{tr}(A)+\mathbb E_0(y)^T A\mathbb E_0(y)$ for a symmetric matrix $A$,
	\begin{align}
		\begin{split}
			|\mathbb E_0(\hat\sigma_{J,\tau,\lambda}^2)-\sigma_0^2|&\le | n^{-1}\sigma_0^2 \text{tr}( U_{\J,\tau,\lambda})-\sigma_0^2|+n^{-1} F_0^T  U_{\J,\tau,\lambda} F_0 \\
			&\le n^{-1} \sigma_0^2 \text{tr}(I_n- U_{\J,\tau,\lambda})+n^{-1}  \lVert U_{\J,\tau,\lambda}\rVert_{\text{sp}} \lVert F_0\rVert_2^2,
			\label{eqn:exsig2}
		\end{split}
	\end{align}
	where $F_0=\mathbb E_0(y)\in\mathbb R^n$.
	Observe that 
	\begin{align}
		\begin{split}
	\text{tr}(I_n- U_{\J,\tau,\lambda}) &= \frac{n}{n + \kappa^2} +\text{tr}\!\left( 
	\tilde B_\J
	\Omega_{\J,\tau,\lambda}^{-1}
	\tilde B_\J^T \right) \\
	&= \frac{n}{n + \kappa^2} +\left(1 + \frac{\tau}{\lambda n} \right)^{-1}\text{tr}\!\left( 
	\Omega_{\J,\tau,\lambda}^{-1}\left(1 + \frac{\tau}{\lambda n} \right)
	\tilde B_\J^T\tilde B_\J \right).
			\end{split}
	\label{eqn:aa2}
\end{align}	
Plugging in the expression of $\Omega_{\J,\tau,\lambda}$, we obtain
		\begin{align*}
		\text{tr}\!\left( 
		\Omega_{\J,\tau,\lambda}^{-1}\left(1 + \frac{\tau}{\lambda n} \right)
		\tilde B_\J^T\tilde B_\J \right) &= \text{tr}\!\left( 
		I_{J-1} - \Omega_{\J,\tau,\lambda}^{-1}\left(\frac{1-\tau}{\lambda}\right) \tilde P_\J \right) \\
		&= (J-1)-\left(\frac{1-\tau}{\lambda}\right) \text{tr}\!\left( \Omega_{\J,\tau,\lambda}^{-1/2}\tilde P_\J  \Omega_{\J,\tau,\lambda}^{-1/2} \right)\\
		&\le J-1,
	\end{align*}
because $ \Omega_{\J,\tau,\lambda}^{-1/2}\tilde P_\J  \Omega_{\J,\tau,\lambda}^{-1/2} $ is positive semidefinite and the trace term is nonnegative. This implies that \eqref{eqn:aa2} is bounded by $J$.
It is also obvious that 
	$\lVert U_{\J,\tau,\lambda}\rVert_{\text{sp}} \le 1$ because $(1_n1_n^T)/(n + \kappa^2) +
	\tilde B_\J
	\Omega_{\J,\tau,\lambda}^{-1}
	\tilde B_\J^T$ is positive semi-definite. Hence, \eqref{eqn:exsig2} is bounded by a constant multiple of $J/n+1$ under \ref{asm:holder}. Putting everything together, we obtain that $\mathbb E_0\mathbb E(\sigma^2 \mid y)$ is bounded, and the assertion follows as soon as $D_n$ is increasing. 
\end{proof}

\begin{proof}[Proof of Lemma~\ref{lmm:eigen}]
Observe that $A_\J\theta_\J = (\theta_0,\tilde\theta_\J^T)^T$ with $A_\J$ in \eqref{eqn:linearmap}.
	Thus, we obtain that
	\begin{align}
		\label{eqn:eigenbo1}
		\begin{split}
			\Sigma_{\J,\sigma,\tau,\lambda}^{-1} &= \sigma^{-2} A_\J^T 
			\begin{pmatrix} \kappa^{-2} & 0_{J-1}^T \\
				0_{J-1} & \lambda^{-1} \!\left((1-\tau) \tilde P_\J + \tau n^{-1} \tilde B_\J^T \tilde B_\J \right)
			\end{pmatrix} A_\J \\
			&= \sigma^{-2} \!\left\{ \lambda^{-1} (1-\tau) A_\J^T 
			\begin{pmatrix} 
				0 & 0_{J-1}^T \\
				0_{J-1}& \tilde P_\J
			\end{pmatrix} 
			A_\J
			+
			A_\J^T
			\begin{pmatrix}
				\kappa^{-2} & 0_{J-1}^T \\
				0_{J-1}  & \tau/(\lambda n) \tilde B_\J^T \tilde B_\J
			\end{pmatrix}
			A_\J
			\right\}.
		\end{split}
	\end{align}
We showed that $[0_{J-2},\tilde S_{J_m}]A_\J = [-\tilde S_{J_m}1_{J-1},\tilde S_{J_m}]=S_{J_m}$ in the proof of Proposition~\ref{lmm:pspline}. Therefore, \eqref{eqn:eigenbo1} is equal to
	$$
	\sigma^{-2} \!\left\{ \lambda^{-1} (1-\tau) P_\J
	+
	A_\J^T R_\J^T R_\J A_\J
	\right\},
	$$
	where
	$$
	 R_\J = \begin{pmatrix} 1_n & \tilde B_\J \end{pmatrix} \begin{pmatrix} 1/(\kappa \sqrt{n}) & 0_{J-1}^T \\ 0_{J-1} & \sqrt{\tau/(\lambda n)} I_{J-1} 
	\end{pmatrix} .
	$$
	Since $[1_n , \tilde B_\J ] A_\J= B_\J$, we obtain that
	\begin{align*}
		\rho_{\min}(\Sigma_{\J,\sigma,\tau,\lambda}^{-1})&\ge \sigma^{-2}\left\{ \lambda^{-1}(1-\tau)\rho_{\min}(P_\J)+ \min(\kappa^{-2}, \tau/\lambda) \rho_{\min}(n^{-1}B_\J^T B_\J)\right\}, \\
		\rho_{\max}(\Sigma_{\J,\sigma,\tau,\lambda}^{-1})&\le \sigma^{-2}\left\{ \lambda^{-1}(1-\tau)\rho_{\max}(P_\J)+ \max(\kappa^{-2}, \tau/\lambda) \rho_{\max}(n^{-1}B_\J^T B_\J)\right\}.
	\end{align*}
First, we obtain $\rho_{\max}(n^{-1}B_\J^T B_\J)\le n^{-1}\lVert B_\J\rVert_{\text{sp}}^2\le \lVert B_\J\rVert_{\infty}^2\le 1$, where $\lVert B_\J\rVert_{\infty}$ is the usual $\infty$-norm (the maximum absolute row sum), which is 1 due to the sum-to-unity property. Clearly, $\rho_{\min}(P_\J)=0$. To find an upper bound for $\rho_{\max}(P_\J)$, observe that $P_\J = \sum_{m=1}^p S_{J_m}^T S_{J_m}$, and $S_{J_m}^T S_{J_m}$ and $S_{J_m} S_{J_m}^T$ share the identical nonzero eigenvalues. Since the eigenvalues of the Kronecker product of matrices are the product of the eigenvalues of each matrix (see, for example, Theorem 13.12 of \citet{laub2004matrix}),
we obtain
$$
\rho_{\max}(S_{J_m} S_{J_m}^T) = \rho_{\max}(I_{J_1}\otimes\dots\otimes I_{J_{m-1}}\otimes D_{J_m} D_{J_m}^T \otimes I_{J_{m+1}}\otimes\dots\otimes I_{J_p}) =\rho_{\max}( D_{J_m}D_{J_m}^T).
$$
The rightmost term is a symmetric banded Toeplitz matrix. Lemma~6 of \citet{gray2006toeplitz} shows that $\rho_{\max}(D_{J_m}D_{J_m}^T)$ is bounded by the sum of entries in the fourth row of Pascal's triangle, which is $16$. Therefore, $\rho_{\max}( P_\J)\le \sum_{m=1}^p \rho_{\max}( S_{J_m}^T S_{J_m})\le 16 p$.
\end{proof}

\bibliographystyle{apalike}
\bibliography{references.bib}

\end{document}